\newcommand{\Z}{\mathbb{Z}}
\newcommand{\E}{\mathbb{E}}
\newcommand{\ket}[1]{| #1 \rangle}
\newcommand{\proj}[1]{| #1 \rangle \langle #1 |}
\DeclareMathOperator{\poly}{poly}
\DeclareMathOperator{\polylog}{polylog}
\newcommand{\be}{\begin{equation}}
\newcommand{\ee}{\end{equation}}
\newcommand{\bea}{\begin{eqnarray}}
\newcommand{\eea}{\end{eqnarray}}
\newcommand{\bes}{\begin{equation*}}
\newcommand{\ees}{\end{equation*}}
\newcommand{\beas}{\begin{eqnarray*}}
\newcommand{\eeas}{\end{eqnarray*}}
\newtheorem*{rep@theorem}{\rep@title}
\newcommand{\newreptheorem}[2]{%
\newenvironment{rep#1}[1]{%
 \def\rep@title{#2 \ref{##1} (restated)}%
 \begin{rep@theorem}}%
 {\end{rep@theorem}}}
\newtheorem{thm}{Theorem}
\newtheorem*{thm*}{Theorem}
\newtheorem{lem}[thm]{Lemma}
\newtheorem*{lem*}{Lemma}
\newtheorem{prop}[thm]{Proposition}
\newtheorem{claim}[thm]{Claim}
\begin{document}


\title{Quantum pattern matching fast on average}
\author{Ashley Montanaro\thanks{Department of Computer Science, University of Bristol, UK; {\tt ashley@cs.bris.ac.uk}.}}
\maketitle

\begin{abstract}
The $d$-dimensional pattern matching problem is to find an occurrence of a pattern of length $m \times \dots \times m$ within a text of length $n \times \dots \times n$, with $n \ge m$. This task models various problems in text and image processing, among other application areas. This work describes a quantum algorithm which solves the pattern matching problem for random patterns and texts in time $\widetilde{O}((n/m)^{d/2} 2^{O(d^{3/2}\sqrt{\log m})})$. For large $m$ this is super-polynomially faster than the best possible classical algorithm, which requires time $\widetilde{\Omega}( n^{d/2} + (n/m)^d)$. The algorithm is based on the use of a quantum subroutine for finding hidden shifts in $d$ dimensions, which is a variant of algorithms proposed by Kuperberg.
\end{abstract}


\section{Introduction}

One of the most fundamental tasks in computer science is pattern matching: finding some desired data (the {\em pattern}) within a larger data set (the {\em text}). This problem has been of interest for decades, both in its own right and as part of more complicated questions in text processing, bioinformatics and image processing.

Here we consider the $d$-dimensional pattern matching problem, for arbitrary $d=O(1)$. Two examples of this problem are shown in Figure \ref{fig:pm}. We are given access to a text $T$ and a pattern $P$ over an alphabet $\Sigma$, with $|\Sigma| = q \ge 2$. Our task is to find an instance of $P$ within $T$, if such an instance exists. That is, writing $[n]:=\{0,\dots,n-1\}$ and thinking of $T$ and $P$ as functions $T:[n]^d \rightarrow \Sigma$, $P:[m]^d \rightarrow \Sigma$, we are required to output $s \in [n-m]^d$ such that $T(s+x) = P(x)$ for all $x \in [m]^d$, if such an $s$ exists; otherwise, we should output ``not found''. Throughout this work, we call any function of the form $S: [k]^d \rightarrow \Sigma$ a {\em string}, and think of  strings interchangeably as functions or $k \times \dots \times k$ arrays of elements of $\Sigma$. We assume throughout that $m \le n$.

The classical KMP algorithm of Knuth, Morris and Pratt~\cite{knuth77} from 1977 solves the pattern matching problem for $d=1$ in time $\Theta(n+m)$ in the worst case. This is clearly optimal, as every classical pattern-matching algorithm which is correct on all inputs must inspect every character of the pattern and the text. However, significantly improved runtimes can be achieved for more typical inputs. Consider a model where each character of the text is chosen at random from $\Sigma$, and the pattern is either uniformly random too (in which case, if it is long enough, it will not match the text with high probability), or is chosen to be a random substring of the text. A simple algorithm was given by Knuth~\cite[Section 8]{knuth77} which runs in time $O(n (\log_q m) / m + m)$ with high probability on such random inputs, while still running in time $O(n+m)$ in the worst case. Observe that the average-case runtime is substantially sublinear in $n$ for large $m$, but never better than $O(\sqrt{n \log n})$.

Shortly after this algorithm was developed, Yao proved an $\Omega((n/m) \log_q m)$ lower bound for the 1-dimensional matching problem, for random text and pattern~\cite{yao79b}. The bound extends to give a $\Omega((n/m)^d \log_q m)$ lower bound for the $d$-dimensional problem~\cite{karkkainen94}. More recently, an algorithm which runs in time $O((n/m)^d \log_q m + m^d)$ for the general $d$-dimensional problem, for random text and pattern, was given by K\"arkk\"ainen and Ukkonen~\cite{karkkainen94}. This is thus optimal up to the $O(m^d)$ term, which corresponds to preprocessing time for the pattern.

A quantum pattern-matching algorithm for the 1-dimensional case has been presented by Ramesh and Vinay~\cite{ramesh03}, which runs in time $\widetilde{O}(\sqrt{n})$ and hence achieves a square-root speedup over the best possible classical algorithm's worst-case complexity. However, the sublinear classical results mentioned above raise the following question: could there be a quantum pattern-matching algorithm which significantly outperforms its classical counterparts on average-case inputs which are more likely to occur in practice?

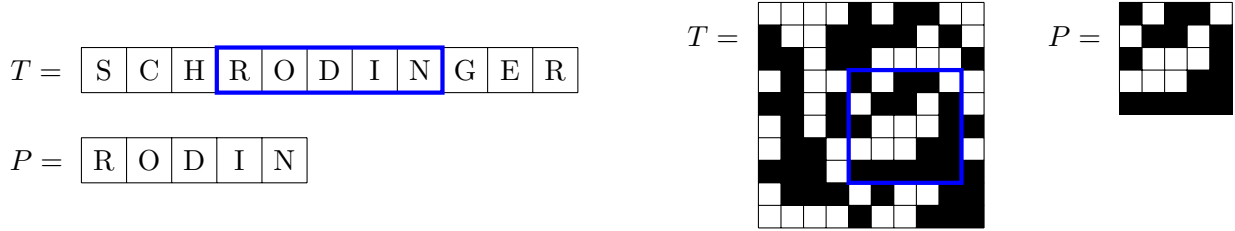
\begin{figure}
\begin{tikzpicture}[scale=0.6]
\draw (0,0) grid (11,1);
\draw (0,-2) grid (5,-1);
\node at (-1,0.5) {$T=$}; \node at (0.5,0.5) {S}; \node at (1.5,0.5) {C}; \node at (2.5,0.5) {H}; \node at (3.5,0.5) {R}; \node at (4.5,0.5) {O}; \node at (5.5,0.5) {D}; \node at (6.5,0.5) {I}; \node at (7.5,0.5) {N}; \node at (8.5,0.5) {G}; \node at (9.5,0.5) {E}; \node at (10.5,0.5) {R};
\draw[ultra thick,blue] (3,0) rectangle (8,1);
\node at (-1,-1.5) {$P=$}; \node at (0.5,-1.5) {R}; \node at (1.5,-1.5) {O}; \node at (2.5,-1.5) {D}; \node at (3.5,-1.5) {I}; \node at (4.5,-1.5) {N};
\begin{scope}[xshift=15cm,yshift=-3cm,scale=0.5]
\draw (0,0) grid (10,10);
\pgfmathsetseed{1};
\node at (-2,8.5) {$T=$};
\foreach \x in {0,...,9}{
\foreach \y in {0,...,9}{
\pgfmathifthenelse{random(0,1)==0}{"\noexpand\fill (\x,\y) rectangle (\x+1,\y+1);"}{}\pgfmathresult
}
}
\draw[ultra thick,blue] (4,2) rectangle (9,7);
\begin{scope}[xshift=2cm,yshift=3cm]
\pgfmathsetseed{1};
\node at (12,5.5) {$P=$};
\draw (14,3) grid (19,7);
\foreach \x in {0,...,9}{
\foreach \y in {0,...,9}{
\pgfmathifthenelse{(random(0,1)==0) && (4 <= \x) && (\x <= 8) && (2 <= \y) && (\y <= 6)}{"\noexpand\fill (\x+10,\y) rectangle (\x+11,\y+1);"}{}\pgfmathresult
}
}
\end{scope}
\end{scope}
\end{tikzpicture}
\caption{Examples of 1D and 2D pattern matching problems, with matches highlighted.}
\label{fig:pm}
\end{figure}


\subsection{Statement of results}
\label{sec:results}

We give a quantum algorithm which, for most instances of the $d$-dimensional pattern matching problem, is super-polynomially faster than the best possible classical algorithm.

\begin{thm}
\label{thm:random}
Assume $m = \omega(\log n)$. Let $T:[n]^d \rightarrow \Sigma$ be picked uniformly at random. Let $P : [m]^d \rightarrow \Sigma$ be picked either (a) by choosing an arbitrary $m \times \dots \times m$ substring of $T$, or (b) by choosing each element of $P$ uniformly at random from $\Sigma$. Then there is a quantum algorithm which runs in time $\widetilde{O}((n/m)^{d/2} 2^{O(d^{3/2}\sqrt{\log m})})$ and determines which is the case. In case (a), the algorithm also outputs the position at which $P$ matches $T$. The algorithm fails with probability $O(1/n^d)$, taken over both the choice of $T$ and $P$, and the algorithm's internal randomness.

Any classical bounded-error algorithm for the same problem must make $\widetilde{\Omega}( n^{d/2} + (n/m)^d )$ queries to $T$ and $P$ in total.
\end{thm}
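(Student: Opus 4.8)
\medskip
\noindent\textbf{Strategy and the hidden-shift instances.}\quad The plan is to reduce the task to $\Theta((n/m)^d)$ independent \emph{$d$-dimensional hidden shift} problems, one per $m\times\dots\times m$ block of $T$, to solve each with a Kuperberg-style subroutine running in time $2^{O(d^{3/2}\sqrt{\log m})}\polylog n$ on injective instances, and to search over the blocks with amplitude amplification. Partition $[n]^d$ into blocks $mb+[m]^d$ and charge any putative occurrence with corner $s$ to the block containing $s$. For a fixed block $b$, choose $\ell$ with $\ell^d=\Theta(d\log_q n)$ (large enough implied constant, so that $\ell=o(m)$ since $m=\omega(\log n)$), put $m'=m-\ell$, let $g\colon[m']^d\to\Sigma^{[\ell]^d}$ be given by $g(z)=P|_{z+[\ell]^d}$, and let $f(z)=T|_{mb+z+[\ell]^d}$ on the set of $z\in[2m-\ell]^d$ for which that window lies inside $[n]^d$. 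If $P$ occurs at $s=mb+r$ with $r\in[m]^d$, then $f(z)=g(z-r)$ for \emph{every} $z\in r+[m']^d$, and $r+[m']^d$ lies in the domain of $f$; taking the sub-window strictly inside $P$ is exactly what makes this agreement exact, with no boundary mismatch. As shown below $f$ and $g$ are injective with high probability, so this is a genuine injective hidden shift, the only blemish being the constant domain-size ratio $|\mathrm{dom}\,f|/|\mathrm{dom}\,g|\le 4^d$, which costs at most a $\poly(2^d)$ factor when generating Kuperberg's phase states $\ket0+\omega^{k\cdot r}\ket1$.

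\noindent\textbf{The algorithm and its running time.}\quad For each $b$ let $\mathcal A_b$ be the procedure: build $(f,g)$; run the $d$-dimensional hidden-shift subroutine to get a candidate $r$; classically test $T(mb+r+x)=P(x)$ on $\Theta(\log n)$ uniformly random $x\in[m]^d$; accept iff all tests pass. Repeating this $O(\log n)$ times makes both the probability that $\mathcal A_b$ accepts on a block containing no occurrence and the probability that it rejects on the block that contains the (unique) occurrence polynomially small, small enough that the union bound over the $\Theta((n/m)^d)$ blocks costs only $O(1/n^d)$. Running amplitude amplification over the blocks with the boosted $\mathcal A_b$ as marker, and noting that there is at most one marked block, $O((n/m)^{d/2})$ calls to $\mathcal A_b$ suffice to find it or to conclude (with small error) that there is none; if some $b$ is returned we recompute and verify $r$ and output ``case (a), occurrence at $mb+r$'', otherwise ``case (b)''. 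Since one call to $\mathcal A_b$ costs $2^{O(d^{3/2}\sqrt{\log m})}\polylog n$ (the subroutine, plus $\widetilde{O}(1)$ verification), the total is $\widetilde{O}\big((n/m)^{d/2}\,2^{O(d^{3/2}\sqrt{\log m})}\big)$.

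\noindent\textbf{Correctness, and the main obstacle.}\quad All of the following hold with probability $1-O(1/n^d)$ over $T$ (and $P$), using $m=\omega(\log n)$: (i) all $\ell\times\dots\times\ell$ windows of $T$ are distinct, which follows by a union bound over the $\le n^{2d}$ pairs (together with the usual periodicity argument for overlapping windows) from $n^{2d}q^{-\Omega(\ell^d)}=o(1/n^d)$; and (ii) $P$ occurs in $T$ exactly once in case (a) and not at all in case (b), both from $n^d q^{-\Omega(m^d)}=n^{-\omega(1)}$. On this event every $(f,g)$ is injective; on the unique block containing the occurrence the subroutine returns the correct $r$ with constant probability, so the boosted $\mathcal A_b$ accepts, while on a block with no occurrence the two functions are unrelated by any shift, the subroutine's output fails verification, and $\mathcal A_b$ rejects. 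A union bound over the blocks, the amplification error, and the randomness failure event bounds the total error by $O(1/n^d)$. I expect the main obstacle to be the hidden-shift subroutine itself: establishing that Kuperberg's sieving can be carried out in $d$ dimensions over $\Z_N^d$-type domains with $N=\poly(m)$ in time $2^{O(d^{3/2}\sqrt{\log m})}$, and that it tolerates the constant domain-size mismatch (and, in set-ups where one settles for $1-O(d\ell/m)$ agreement instead of exact agreement, a vanishing fraction of corrupted phase states).

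\noindent\textbf{The classical lower bound.}\quad The $\widetilde{\Omega}((n/m)^d)$ term is immediate from the average-case lower bounds of Yao~\cite{yao79b} and of K\"arkk\"ainen--Ukkonen~\cite{karkkainen94}, which apply to exactly this distribution. For the $\widetilde{\Omega}(n^{d/2})$ term I would invoke Yao's principle, fix a deterministic $Q$-query algorithm, and compare its transcript under the two distributions. Under (b) the answers are i.i.d.\ uniform. Under (a) we have $P(x)=T(s+x)$ with $s$ uniform and independent of $T$, so the answers are \emph{still} i.i.d.\ uniform \emph{unless} the algorithm ever queries a cell $T(t)$ and a cell $P(x)$ with $t=s+x$; querying $a$ cells of $P$ and $b$ cells of $T$, this has probability $O(ab/n^d)$. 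Hence the two transcripts are within total variation distance $O(ab/n^d)\le O(Q^2/n^d)$, so a bounded-error distinguisher forces $Q^2=\Omega(n^d)$, i.e.\ $Q=\Omega(n^{d/2})$. Adding the two bounds gives $\widetilde{\Omega}(n^{d/2}+(n/m)^d)$.
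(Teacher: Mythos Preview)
Your proposal is correct and shares the paper's high-level outline—injectivize via $\ell$-windows, reduce to a Kuperberg-type hidden shift, and amplitude-amplify over a search space—but the reduction is set up differently. The paper does not partition into $m$-blocks. Instead it picks a random offset $t$ and compares the size-$m'$ window of $T^{\triangleright\nu}$ at $t$ against the size-$m'$ window of $P^{\triangleright\nu}$, i.e.\ \emph{same-size} domains embedded in $\Z_{m'}^d$ with wrap-around. If the true match $t'$ satisfies $0\le t'_i-t_i\le\epsilon m'$ for all $i$, the two functions agree except on an $O(d\epsilon)$ fraction of inputs, and the paper's approximate hidden-shift subroutine (Theorem~\ref{thm:dmatchapprox}) recovers $t'-t$ provided $\epsilon=O((\log m)^{-2}2^{-\sqrt{(2\log_2 3)d\log_2 m}})$. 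The price is $\epsilon^{-d/2}$ extra Grover iterations, which is precisely where the exponent $d^{3/2}\sqrt{\log m}$ in the stated bound arises. Your route instead takes a text window roughly twice the pattern's size so that the agreement is \emph{exact} on the pattern's domain; the mismatch shows up only as an $\Omega(2^{-d})$ acceptance probability when producing the coset state $\tfrac{1}{\sqrt2}(\ket{0}\ket{a}+\ket{1}\ket{a-r})$, a cheaper overhead that in principle yields a better $d$-dependence than the paper claims (though for $d=O(1)$ both fit under $2^{O(d^{3/2}\sqrt{\log m})}$). For the classical $\Omega(n^{d/2})$ bound, your direct position-collision coupling (cases (a) and (b) are indistinguishable until the algorithm queries a pair $T(t),P(x)$ with $t=s+x$) is cleaner for the random model than the paper's route, which first coarsens into injective ``megacharacters'' of side $\Theta((\log_q n)^{1/d})$ and then applies a value-collision argument (Lemma~\ref{lem:injclasslb} and Proposition~\ref{prop:classrandomlb}).
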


The $\widetilde{O}$, $\widetilde{\Omega}$ notation suppresses factors logarithmic in $m$ and $n$ (see Propositions \ref{prop:random} and \ref{prop:classrandomlb} below for a more detailed statement of the quantum and classical complexities, respectively). The time complexity is stated in the standard quantum circuit model, assuming that a query to $T$ or $P$ uses time $O(1)$. We can think of $T$ and $P$ as either easily evaluated oracle functions in the query complexity model, or data stored in an efficiently accessible quantum random-access memory~\cite{giovannetti08}. All non-query operations performed by the algorithm contribute only polylogarithmic factors to the time complexity.

Observe that, for any fixed $d$, $2^{O(d^{3/2}\sqrt{\log m})} = o(m^\epsilon)$ for any $\epsilon > 0$. When $m$ is large, Theorem \ref{thm:random} thus demonstrates a super-polynomial separation between quantum and classical complexity (when $m$ is small, e.g.\ $O(\log n)$, straightforward use of Grover's algorithm is faster). For example, when $m = \Omega(n)$, we get a quantum algorithm running in time $\widetilde{O}(2^{O(d^{3/2}\sqrt{\log n})})$, as opposed to the best classical complexity of $\widetilde{\Omega}(n^{d/2})$. The omitted constants in the $O(d^{3/2}\sqrt{\log m})$ term in the exponent are not unreasonably high. For $d=1$, for example, the algorithm's runtime is $\widetilde{O}(\sqrt{n/m}\,2^{2.68\dots\sqrt{\log_2 m}})$. Theorem \ref{thm:random} is a rare example of a super-polynomial average-case separation between quantum and classical computation for a natural problem and a natural distribution on the input. An exponential average-case separation was previously proven~\cite{gavinsky11} for a related problem (an oracular hidden shift problem over $\Z_2^n$, see below), but that problem is arguably less natural than pattern matching.

Theorem \ref{thm:random} is based on a more general pattern matching result, which holds for non-random patterns and texts. In order to state this result more formally, we need some notation. For any string $S : [n]^d \rightarrow \Sigma$, we define a new string $S^{\triangleright k}:[n-k+1]^d \rightarrow \Sigma^{k^d}$, where $S^{\triangleright k}(s_1,\dots,s_d)$ is equal to the size $k \times \dots \times k$ substring of $S$ beginning at position $s_1,\dots,s_d$.  Formally, for any $f:[n]^d \rightarrow \Sigma$, $k \in \{1,\dots,n\}$, $s \in [n-k+1]^d$, let $f_{s,k}: [k]^d \rightarrow \Sigma$ be defined by $f_{s,k}(z_1,\dots,z_d) = f(s_1+z_1,\dots,s_n+z_n)$. Then
\[ S^{\triangleright k}(s_1,\dots,s_d) = S_{s,k}. \]
An example of this operation is shown in Figure \ref{fig:2dinj}. Note that we always consider $S^{\triangleright k}$ to be a string over the alphabet $\Sigma^{k^d}$; equivalently, a function $S^{\triangleright k}:[n-k+1]^d \rightarrow \Sigma^{k^d}$. Let the {\em injectivity length} of $S$, $\upsilon(S)$, be the minimal $k$ such that $S^{\triangleright k}$ is injective (i.e.\ all of its values are distinct).

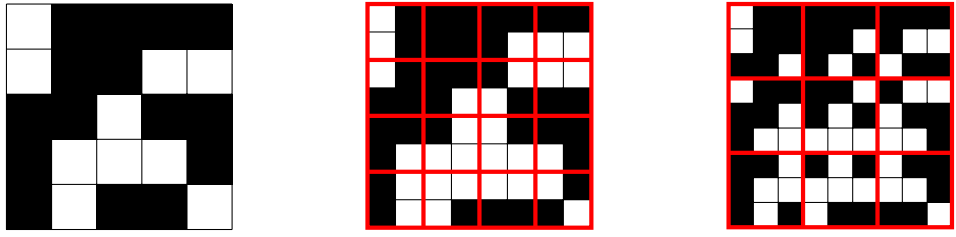
\begin{figure}
\begin{center}
\begin{tikzpicture}[scale=0.6]
\def\mytext{{{0,0,1,1,1},{1,1,1,0,0},{1,1,0,0,1},{1,0,1,0,1},{1,0,1,1,0}}}
\foreach \x in {0,...,4}{
\foreach \y in {0,...,4}{
\pgfmathifthenelse{\mytext[\x][\y]==1}{"\noexpand\fill (\x,-\y) rectangle (\x+1,-\y-1);"}{}\pgfmathresult
}
}
\draw (0,-5) grid (5,0);
\begin{scope}[xshift=8cm,scale=0.62]
\foreach \x in {0,...,3}{
\foreach \y in {0,...,3}{
\pgfmathparse{min(1,subtract(4,\x))}
\foreach \z in {0,...,\pgfmathresult} {
\pgfmathparse{min(1,subtract(4,\y))}
\foreach \w in {0,...,\pgfmathresult} {
\pgfmathifthenelse{\mytext[\x+\z][\y+\w]==1}{"\noexpand\fill (2*\x+\z,-2*\y-\w) rectangle (2*\x+\z+1,-2*\y-\w-1);"}{}\pgfmathresult
}
}
}
}
\draw (0,-8) grid (8,0);
\draw[red,ultra thick] (0,0) rectangle (8,-8);
\draw[red,ultra thick] (2,0) -- (2,-8); \draw[red,ultra thick] (4,0) -- (4,-8); \draw[red,ultra thick] (6,0) -- (6,-8);
\draw[red,ultra thick] (0,-2) -- (8,-2); \draw[red,ultra thick] (0,-4) -- (8,-4); \draw[red,ultra thick] (0,-6) -- (8,-6);
\end{scope}
\begin{scope}[xshift=16cm,scale=0.55]
\foreach \x in {0,...,2}{
\foreach \y in {0,...,2}{
\pgfmathparse{min(2,subtract(4,\x))}
\foreach \z in {0,...,\pgfmathresult} {
\pgfmathparse{min(2,subtract(4,\y))}
\foreach \w in {0,...,\pgfmathresult} {
\pgfmathifthenelse{\mytext[\x+\z][\y+\w]==1}{"\noexpand\fill (3*\x+\z,-3*\y-\w) rectangle (3*\x+\z+1,-3*\y-\w-1);"}{}\pgfmathresult
}
}
}
}
\draw (0,-9) grid (9,0);
\draw[red,ultra thick] (0,0) rectangle (9,-9);
\draw[red,ultra thick] (3,0) -- (3,-9);
\draw[red,ultra thick] (6,0) -- (6,-9);
\draw[red,ultra thick] (0,-3) -- (9,-3);
\draw[red,ultra thick] (0,-6) -- (9,-6);
\end{scope}
\end{tikzpicture}
\end{center}
\caption{Converting a non-injective 2D string $S$ (LHS) into an injective string $S^{\triangleright 3}$ (RHS). The centre string $S^{\triangleright 2}$ is not injective, so $\upsilon(S) = 3$. In addition, $\upsilon(S,2)=3$ as the $2 \times 2$ substring of $S^{\triangleright 2}$ in the middle of the top row is not injective.}
\label{fig:2dinj}
\end{figure}

We now consider the string $(S^{\triangleright k})_{s,m}:[m]^d \rightarrow \Sigma^{k^d}$ for $k \in \{1,\dots,n\}$, $m\in\{1,\dots,n-k-1\}$, $s \in [n-k-m+2]^d$. Define the $m$-{\em injectivity length} of $S$, $\upsilon(S,m)$, to be the minimal $k$ such that $(S^{\triangleright k})_{s,m}$ is injective for all $s$. Thus $\upsilon(S,m)\le k$ if every $m\times \dots \times m$ substring of $S^{\triangleright k}$ is injective. Observe that, for any $m$, $\upsilon(S,m) \le \upsilon(S) \le n$, but $\upsilon(S,m)$ can sometimes be much smaller than $\upsilon(S)$. For example, if $S : [n]^d \rightarrow \Sigma$ is constant, $\upsilon(S,1) = 1$, but $\upsilon(S) = n$.

Then the most general result we have is as follows:

\begin{thm}
\label{thm:main}
Fix $d = O(1)$. Let $T : [n]^d \rightarrow \Sigma$ and $P : [m]^d \rightarrow \Sigma$ satisfy $\upsilon(T,m),\upsilon(P) \le \nu \le m/2$, for some $\nu$. Further assume that, for every offset $s$ such that $P$ does not match $T$ at that offset, the fraction of positions $x \in [m]^d$ where $P(x) \neq T(x+s)$ is at least $\gamma$. Then there is a bounded-error quantum algorithm which outputs $s \in [m]^d$ such that $P$ matches $T$ at offset $s$, if such an $s$ exists; otherwise, the algorithm outputs ``not found''. The algorithm makes
\[ O\left( \left(\frac{n\,\log^2 m\,2^{\sqrt{(2\log_2 3)d\log_2 m}}}{m}\right)^{d/2} \left( \nu^d \log m\,2^{\sqrt{(2\log_2 3)d\log_2 m}} + \frac{1}{\sqrt{\gamma}} \right) \right) \]
queries to each of $T$ and $P$. The runtime is the same up to a $\polylog(m)$ factor.
\end{thm}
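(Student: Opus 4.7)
The plan is to reduce the pattern matching task to a $d$-dimensional hidden shift problem, combined with a Grover-style search over candidate rough locations. Any occurrence of $P$ in $T$ lies entirely inside some $(2m)\times\cdots\times(2m)$ substring of $T$, and there are $O((n/m)^d)$ such substrings (suitably overlapping to cover all possible offsets). This makes the outer layer a Grover search over such substrings, contributing the $(n/m)^{d/2}$ factor, provided we have a bounded-error subroutine that decides whether a given substring contains a match.

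To build that subroutine, consider the injectified strings $P^{\triangleright \nu}$ and $T'^{\triangleright \nu}$ on a candidate substring $T'$. The hypothesis $\upsilon(P)\le\nu$ ensures that $P^{\triangleright\nu}$ is injective, and $\upsilon(T,m)\le\nu$ ensures that every $m\times\cdots\times m$ block of $T'^{\triangleright\nu}$ is injective. If $P$ occurs in $T'$ at offset $s$, then $P^{\triangleright\nu}(x)=T'^{\triangleright\nu}(s+x)$ on the common domain. I would embed this into a genuine hidden-shift problem on the abelian group $\Z_{2m}^d$ by padding both functions outside their natural domain with fresh, distinguishable labels that never collide with values inside the meaningful range. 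Each evaluation of $P^{\triangleright\nu}$ or $T'^{\triangleright\nu}$ then costs $\nu^d$ queries to $P$ and $T$, explaining the $\nu^d$ factor in the per-block cost.

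Next I would apply a $d$-dimensional variant of Kuperberg's sieve to recover the hidden shift $s\in\Z_{2m}^d$. Kuperberg's algorithm over $\Z_N$ uses $2^{O(\sqrt{\log N})}$ queries; the natural $d$-dimensional extension produces coset states in $\Z_{2m}^d$ and combines them dimension by dimension (or via a joint sieve), yielding complexity $2^{O(\sqrt{d\log m})}$, with the constant $\sqrt{2\log_2 3}$ inherited from the standard analysis of the sieve. Once a candidate $s$ is returned, correctness is checked by Grover-searching the $m^d$ positions for a disagreement: by the $\gamma$-hypothesis a non-matching $s$ has at least a $\gamma$ fraction of disagreements, so verification takes $O(1/\sqrt{\gamma})$ queries. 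The per-block cost is therefore $\nu^d\log m\cdot 2^{O(\sqrt{d\log m})}+1/\sqrt{\gamma}$, where the $\log m$ comes from boosting the hidden-shift subroutine's success probability. Composing this with the outer Grover via nested amplitude amplification yields the stated bound; the extra $(\log^2 m\cdot 2^{O(\sqrt{d\log m})})^{d/2}$ outside arises because amplitude amplification on a subroutine whose own cost is $2^{O(\sqrt{d\log m})}$ multiplies into the outer search $d/2$ times.

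The main obstacle is the $d$-dimensional hidden-shift subroutine. Kuperberg's original sieve is designed for cyclic abelian groups, and the analogue for $\Z_{2m}^d$ requires generating and combining coset states in higher rank without letting the number of collection stages (and the $\sqrt{}$ in the exponent) blow up by more than a factor polynomial in $d$; showing this produces the $d^{3/2}\sqrt{\log m}$ exponent after the outer search is folded in. A secondary technical point is that the padding construction must genuinely turn $P^{\triangleright\nu}$ and $T'^{\triangleright\nu}$ into injective functions on the full group with a well-defined shift, which is where the assumption $\nu\le m/2$ is used, so that the meaningful domains fit comfortably inside $\Z_{2m}^d$ and the shift $s$ lies uniquely in the appropriate range.
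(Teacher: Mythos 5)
Your proposal has a genuine gap in the per-block subroutine. You propose to Grover over $O((n/m)^d)$ substrings of side $2m$ and, inside each block, to ``embed'' the pair $(T'^{\triangleright\nu},P^{\triangleright\nu})$ into a hidden-shift instance on $\Z_{2m}^d$ by padding $P^{\triangleright\nu}$ (domain of size about $(m-\nu)^d$) with fresh labels out to the full group. But then the promise $g(x)=f(x+s)$ holds only on the pattern's natural domain, i.e.\ on at most a $2^{-d}\le 1/2$ fraction of $\Z_{2m}^d$; everywhere else the padded values disagree with the text by construction. Kuperberg-type sieves have no tolerance for this: the exact version needs the shift promise everywhere, and the approximate version used in the paper (Theorem \ref{thm:dmatchapprox}) tolerates only an $O(n^{-2}2^{-\sqrt{(2\log_2 3)dn}})$ fraction of disagreements, which is subpolynomially small in $m$, not a constant. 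So the subroutine ``decide whether this $(2m)^d$ block contains a match'' cannot be implemented this way, and the assumption $\nu\le m/2$ is not what rescues it.

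The paper avoids this by searching not over coarse blocks but over all fine offsets $t$, and by only requiring the hidden-shift step to succeed when the guess $t$ is within $\epsilon m'$ of a true match in every coordinate, where $m'$ is the largest power of two at most $m-\nu$ and $\epsilon = O(\log^{-2}m\,2^{-\sqrt{(2\log_2 3)d\log_2 m}})$. For such a close guess, the two injectified windows $(T^{\triangleright\nu})_{t,m'}$ and $(P^{\triangleright\nu})_{0,m'}$, read cyclically over $\Z_{m'}^d$, are exact shifts of each other except on an $O(\epsilon)$ fraction of positions, which is within the tolerance of Theorem \ref{thm:dmatchapprox}; the candidate shift is then verified with {\sf Check} in $O(1/\sqrt{\gamma})$ queries, and the whole verifier ({\sf RoughCheck}) is wrapped in bounded-error Grover search over $t$. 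This also corrects your accounting of the $(\log^2 m\,2^{\sqrt{(2\log_2 3)d\log_2 m}})^{d/2}$ factor: it is not produced by the subroutine cost ``multiplying into the outer search $d/2$ times'' (the subroutine cost multiplies the iteration count once, linearly); it arises because the density of acceptable guesses is $(\epsilon m'/n)^d$, so the Grover search needs $O((n/(\epsilon m'))^{d/2})$ iterations, and $(1/\epsilon)^{d/2}$ is exactly that factor. Your other ingredients ($\nu^d$ queries to simulate one query to the injectified strings, the $O(1/\sqrt{\gamma})$ verification, and a $d$-dimensional Kuperberg-style sieve) do match the paper, but the reduction to a hidden-shift instance as you set it up would fail.
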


Theorem \ref{thm:main} may appear somewhat hard to digest. The intuition is that the algorithm is efficient, i.e.\ has runtime close to $O((n/m)^{d/2})$, when: the strings formed by concatenating all short substrings of both $T$ and $P$ are injective; and offsets where there is no match can be efficiently tested and discarded. The algorithm can thus be seen as achieving a speedup in a scenario somewhat similar to that considered in the field of property testing~\cite{montanaro13c}, where it has the promise that each potential match is either actually a match, or is far from being a match.

The complexity parameters of Theorem \ref{thm:main} are essentially optimal for, say, $\nu = 1$, up to terms of the form $2^{O(\sqrt{\log m})}$, which is $o(m^\epsilon)$ for any $\epsilon>0$. Indeed, it is easy to show the following bound using standard techniques: any bounded-error quantum algorithm which solves the problem described in Theorem \ref{thm:main} must make $\Omega((n/m)^{d/2}/\sqrt{\gamma} )$ queries in the case where $P$ is injective and completely known in advance, and $\upsilon(T,m)=1$. By comparison, any randomised classical pattern matching algorithm for the same problem must make $\Omega((n/m)^d/\gamma)$ queries. For the proof of these bounds, see Lemma \ref{lem:injlb1} below.


\subsection{Techniques}

Theorem \ref{thm:main} is ultimately based around the use of a quantum algorithm for finding hidden shifts in injective functions $f:\Z_{2^n}^d \rightarrow \Sigma$. The algorithm is a variant of algorithms of Kuperberg~\cite{kuperberg05}. Kuperberg's work described several algorithms: two for finding hidden shifts in injective functions $f:\Z_{2^n} \rightarrow \Sigma$, and one for finding hidden shifts in general abelian groups. The algorithm given here achieves essentially the same asymptotic complexity as the best algorithm given in~\cite{kuperberg05}, and appears somewhat simpler to analyse. In particular, we include a full proof of its correctness and complexity.

To use the algorithm, we first make the pattern and text injective. This is similar to the ``injectivisation'' idea used by Gharibi~\cite{gharibi13} in the context of quantum algorithms for abelian hidden shift problems, but here we need a slightly different notion, as used by Knuth~\cite{knuth77}, to ensure we preserve matching after injectivisation. We then apply the hidden shift algorithm by guessing an offset where the pattern matches the text. If our guess is fairly close, then the algorithm succeeds in finding the actual offset where the pattern matches. This guessing process is then wrapped within the use of the bounded-error variant of Grover's search algorithm~\cite{hoyer03} to obtain the final result. Theorem \ref{thm:random} is then derived by simply calculating the quantities $\nu$, $\gamma$ that occur in Theorem \ref{thm:main} for random strings.

Kuperberg showed in~\cite{kuperberg05} that, based on a similar idea of guessing offsets, his algorithms gave a super-polynomial quantum speedup for the task of finding an injective pattern of length $m$, promised to be hidden in an injective text of length $2m$. The contribution here is thus to generalise this idea to arbitrary dimensions $d>1$, to remove the restriction on the length of the text, and to relax the injectivity constraint. We also modify the promise that the pattern is guaranteed to be contained in the text to the promise that any non-matches can be tested efficiently. Observe that a constraint of this form is required if one seeks a runtime which is $o(m^{d/2})$. Imagine we are told an offset at which the pattern is claimed to match the text. If we have no lower bound on the number of positions at which it does not match the text if the claim is false, then verifying this claimed match would require $\Omega(m^{d/2})$ quantum queries in the worst case~\cite{bennett97}.


\subsection{Prior work}

Pattern matching is a fundamental algorithmic task, and has been studied in a number of different contexts.

\subsubsection{Pattern matching}

As well as the KMP algorithm already mentioned, another approach frequently used in practice classically is the Boyer-Moore algorithm~\cite{boyer77}, which achieves good performance for random inputs when the alphabet size is large. More recent classical work has pursued a number of other directions, such as approximate matching and search in compressed strings. For surveys of the (now vast) classical pattern matching literature, see for example~\cite{navarro01,crochemore07}.

Pattern matching has also been considered in the context of quantum computation. Grover's algorithm~\cite{grover97} can be seen as searching for a pattern of length 1 in a text of length $n$ using $O(\sqrt{n})$ queries to the text. The algorithm can be used na\"ively to find a pattern of length $m$ with a complexity of $O(\sqrt{nm})$ queries. However, Ramesh and Vinay~\cite{ramesh03} gave an improved algorithm which uses $\widetilde{O}(\sqrt{n})$ queries in the worst case. Their algorithm is based around the use of the powerful classical concept of deterministic sampling~\cite{vishkin91}. It is easy to see that this complexity is optimal in the worst case up to logarithmic factors, using standard lower bounds on the quantum query complexity of unstructured search~\cite{bennett97}. The algorithm of Ramesh and Vinay achieves a faster runtime than the algorithms described here in the case that $m$ is small (e.g.\ $O(\log n)$).

Curtis and Meyer describe an approach towards efficient quantum search for patterns (``templates'') within 2D images~\cite{curtis04}. This approach is based around the use of the quantum Fourier transform to compute correlations between the template and the image. As noted in~\cite{curtis04}, the algorithm is not complete and many details remain to be worked out before an efficient algorithm could be obtained.


\subsubsection{Hidden shifts}

There is now a fairly substantial body of work in the quantum setting on the closely related problem of finding hidden shifts. In an abstract setting, one is given access to two injective functions $f,g:G \rightarrow X$, for some abelian group $G$ and some set $X$, with the promise that $g(x) = f(x+s)$ for some $s \in G$, where $+$ is addition in the group $G$. The goal is to find $s$. It is known that this problem can be solved with only $O(\log |G|)$ quantum queries to $G$~\cite{ettinger04}. However, for certain groups $G$ it remains unknown whether there is a quantum algorithm which is similarly efficient with respect to time (i.e.\ runs in time $O(\polylog |G|)$), and this is considered to be a major open problem.

The case $G = \Z_n$ is the most relevant to our work here, which is equivalent to the hidden subgroup problem for the dihedral group~\cite{twamley00}. Algorithms to solve this problem have been given by Kuperberg~\cite{kuperberg05,kuperberg13} (whose work we will use and adapt below) and Regev~\cite{regev04}. These algorithms are all super-polynomially faster than the best possible classical algorithm for this problem: Kuperberg's algorithms run in time $2^{O(\sqrt{\log n})}$, while Regev's is almost as fast, running in time $2^{O(\sqrt{\log n \log \log n})}$. However, Kuperberg's algorithms use space $2^{O(\sqrt{\log n})}$, while Regev's algorithm uses space only $\poly(\log n)$. The more recent algorithm of~\cite{kuperberg13} uses only $O(\log n)$ quantum space, but $2^{O(\sqrt{\log n})}$ classical space. Our focus here is on optimising time complexity, so we base our algorithm on Kuperberg's.

The hidden shift problem has been studied for other groups $G$ too. Friedl et al.~\cite{friedl03} have given an efficient quantum algorithm running in time $\poly(\log |G|)$ for the case of $G = \Z_p^n$, where $p$ is a fixed prime and $n$ grows. A different generalisation, which can be seen as interpolating between the abelian hidden subgroup problem and the dihedral hidden subgroup problem, was studied by Childs and van Dam~\cite{childs07b}.

A number of works have studied a slightly different scenario in which one relaxes the injectivity constraint, but replaces it with complete knowledge of $f$. That is, one is given oracle access to a function $g:G \rightarrow X$, such that $g(x) = f(x+s)$ for some known function $f$, and is required to determine $s$. The complexity of the problem then depends on $f$. This problem was studied for $G = \Z_n$ for certain functions $f$ by van Dam, Hallgren and Ip~\cite{vandam06}, as well as by Moore et al.\ for prime $n$~\cite{moore07}. More recently, other works have considered the case $G=\Z_2^n$ in detail~\cite{roetteler09,roetteler10,gavinsky11,ozols13,childs13b}, characterising the complexity of the problem for many families of functions $f$.

To convert non-injective strings into injective strings, we concatenate adjacent symbols within the string. A similar idea was recently used by Gharibi to convert non-injective hidden shift problems over an arbitrary group into injective hidden shift problems~\cite{gharibi13}. The general framework takes a function $f:G \rightarrow X$ and a $k$-tuple $V \in G^k$, and defines a new function $f_V(x) = (f(x\cdot v_1),\dots,f(x\cdot v_k))$, where $\cdot$ is the group multiplication operation. In the hidden shift problem, if $g$ is equal to $f$ up to a shift (i.e.\ multiplication by an unknown group element $s$), then $g_V$ is equal to $f_V$ up to the same shift $s$. Gharibi showed that, for any choice of $V$ with $k = \Omega(\log |G|)$, if $f$ is picked at random then the probability that $f_V$ is not injective is low. This was then used to give an alternative proof that the quantum query complexity of the hidden shift problem over $\Z_2^n$ is low for most functions. The injectivisation procedure over the group $\Z_n$ used here is slightly different: in order to preserve the property of the pattern matching the text, we do not allow the shifts to wrap around and only consider the set $V = \{1,\dots,k\}$.

The hidden shift problem over $\Z_n$ has also been studied classically. In particular, Andoni et al.~\cite{andoni13} have considered a noisy variant where one has access to two boolean functions $f,g:\Z_n \rightarrow \{0,1\}$ such that $g(x) = f(x + s) + r$ for some shift $s$, and some string $r \in \{0,1\}^n$ of random bits where the probability that each bit of $r$ is equal to 1 is independent and equal to $\eta$, for some fixed constant $\eta$. The goal is again to find $s$. A practical motivation for this problem comes from GPS synchronisation. In the case where the values taken by $f$ are uniformly random, it is shown in~\cite{andoni13} that the problem can be solved in sublinear time; their algorithm runs in time $O(n^{0.641})$. This algorithm is not that far from optimal, as the hidden shift problem over $\Z_n$ has a lower bound of $\Omega(\sqrt{n})$ queries~\cite{batu03}.


\subsection{Organisation}

We begin, in Section \ref{sec:pmshift}, by describing how a quantum algorithm for the hidden shift problem can be used to obtain a general pattern-matching algorithm (Theorem \ref{thm:main}). Section \ref{sec:random} contains the calculations showing that this can be applied to random strings (Theorem \ref{thm:random}). In Section \ref{sec:lower}, we prove the required classical lower bounds to complete the proof of Theorem \ref{thm:random}, and also prove quantum lower bounds showing that our algorithms are not too far from optimal. For completeness, in this section we also give a classical algorithm matching the classical lower bound. Section \ref{sec:dihedral} describes the quantum algorithm for the hidden shift problem. We conclude in Section \ref{sec:outlook}.



\section{Quantum pattern matching based on finding hidden shifts}
\label{sec:pmshift}

We will need the following simple lemma, whose proof follows immediately from amplitude amplification~\cite{brassard02}.

\begin{lem}
Assume we have query access to $S,T:[m] \rightarrow \Sigma$ such that either $S=T$, or $|\{x\mid S(x)\neq T(x)\}| \ge \gamma m$, for some $0 < \gamma \le 1$. Then there is a quantum algorithm {\sf Check} such that: in the first case, {\sf Check} accepts with certainty; in the second case, {\sf Check} rejects with probability at least $2/3$; {\sf Check} makes $O(1/\sqrt{\gamma})$ queries. The runtime is the same up to a $\polylog (m)$ factor.
\end{lem}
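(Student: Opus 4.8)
The plan is to reduce the problem to standard amplitude amplification applied to a trivial ``comparison'' subroutine. First I would define a one-query quantum procedure $\mathcal{A}$ that picks an index $x \in [m]$ uniformly at random, queries both $S(x)$ and $T(x)$, and sets a single-qubit flag to $1$ precisely when $S(x) \neq T(x)$. This uses $O(1)$ queries to $S$ and $T$. In the case $S = T$, the flag is never $1$; in the case $|\{x : S(x) \neq T(x)\}| \ge \gamma m$, the flag is $1$ with probability at least $\gamma$.

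Next I would invoke (the exact or fixed-point version of) amplitude amplification~\cite{brassard02}: given a procedure whose ``good'' subspace has amplitude $\sqrt{p}$ with the promise that either $p = 0$ or $p \ge \gamma$, we can, using $O(1/\sqrt{\gamma})$ repetitions of $\mathcal{A}$ and its inverse together with reflections, produce a state from which a measurement reveals a ``good'' index $x$ (i.e.\ one with $S(x) \neq T(x)$) with probability at least $2/3$ whenever $p \ge \gamma$, and of course never finds one when $p = 0$. {\sf Check} then simply runs this amplified procedure once, measures, and makes one further pair of queries to verify whether the returned index $x$ actually satisfies $S(x) \neq T(x)$: it rejects if so, and accepts otherwise. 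When $S = T$ the verification always passes, so {\sf Check} accepts with certainty; when the strings differ in $\ge \gamma m$ places, the amplified procedure outputs a genuine witness with probability $\ge 2/3$, so {\sf Check} rejects with probability $\ge 2/3$.

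For the query bound: $\mathcal{A}$ and $\mathcal{A}^{-1}$ each use $O(1)$ queries, amplitude amplification calls them $O(1/\sqrt{\gamma})$ times, and the final verification is $O(1)$ more queries, giving $O(1/\sqrt{\gamma})$ queries in total. All non-query work (index arithmetic, comparison of alphabet symbols, the reflection operators) is implementable by circuits of size $\polylog(m)$, so the runtime matches the query count up to a $\polylog(m)$ factor, as claimed.

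The only mildly delicate point is that plain amplitude amplification requires knowing $p$ to tune the number of iterations; here we only have the bound $p \ge \gamma$. This is handled either by using a fixed-point / ``amplitude amplification with unknown amplitude'' variant, which succeeds with constant probability given only a lower bound on $p$, or by the standard trick of running $O(\log(1/\gamma))$ rounds with geometrically increasing iteration counts $1, 2, 4, \dots, O(1/\sqrt{\gamma})$ and checking each output; the total query cost remains $O(1/\sqrt{\gamma})$. (Since the success probability $2/3$ in the statement is only a constant, even the crude geometric-search version suffices.) I do not expect any real obstacle here — the lemma is essentially a packaging of amplitude amplification, and the proof is a few lines once $\mathcal{A}$ is written down.
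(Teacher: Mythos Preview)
Your proposal is correct and is precisely the argument the paper has in mind: the paper simply states that the lemma ``follows immediately from amplitude amplification~\cite{brassard02}'', and your write-up is the standard unpacking of that remark. There is nothing to add.
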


The technical core of our algorithm is the following result:

\begin{thm}
\label{thm:dmatchapprox}
Let $d=O(1)$ and let $X$ be an arbitrary finite set. Let $f:\Z_{2^n}^d \rightarrow X$ and $g:\Z_{2^n}^d \rightarrow X$ be injective functions such that $\Pr_x[g(x) \neq f(x+s)] = O(n^{-2} 2^{-\sqrt{(2\log_2 3)dn}})$ for some $s \in \Z_{2^n}^d$. Then there is a quantum algorithm which outputs $s$ with bounded error using $O(n 2^{\sqrt{(2\log_2 3)dn}}) = O(n 2^{1.781\dots\sqrt{dn}})$ queries to each of these functions. The runtime is the same up to a $\poly(n)$ factor.
\end{thm}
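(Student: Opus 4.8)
The plan is to reduce the $d$-dimensional hidden shift problem over $\Z_{2^n}^d$ to $d$ essentially independent one-dimensional hidden shift problems over $\Z_{2^n}$, and to handle each coordinate using a Kuperberg-style sieve. First I would recall the standard reduction from the hidden shift problem to the dihedral hidden subgroup problem: on input a coset-state register, querying $f$ (respectively $g$) on a uniform superposition and measuring the output value $x \in X$ collapses the input to a uniform superposition over the two preimages, which (since $g(\cdot) = f(\cdot + s)$ up to a vanishingly small error set) has the form $\frac{1}{\sqrt 2}(\ket{y} + \ket{y + s})$ for a random $y \in \Z_{2^n}^d$. Applying the quantum Fourier transform over $\Z_{2^n}^d$ turns this into a ``phase-encoded'' state $\frac{1}{\sqrt 2}(\ket{0} + \chi_k(s)\ket{1}) \otimes \ket{k}$ for a uniformly random character label $k \in \Z_{2^n}^d$, where $\chi_k(s) = \exp(2\pi i \, k \cdot s / 2^n)$. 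These are exactly the ``Kuperberg qubits''; the only wrinkle is to verify that the $O(n^{-2} 2^{-\sqrt{(2\log_2 3) d n}})$ error probability in the hypothesis perturbs each such qubit negligibly, so that over the $\poly(n) 2^{O(\sqrt{dn})}$ qubits the sieve consumes, the total error stays $o(1)$; this is a routine union bound.

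Next I would describe the sieve itself (deferring the full analysis to the later section that proves Theorem~\ref{thm:dmatchapprox} in detail — here I only sketch). Working one bit of $s$ at a time is not efficient enough; instead, following Kuperberg, I would combine pairs of qubits: given $\frac{1}{\sqrt 2}(\ket 0 + \chi_{k_1}(s)\ket 1)$ and $\frac{1}{\sqrt 2}(\ket 0 + \chi_{k_2}(s)\ket 1)$, a CNOT and a measurement produces, with probability $1/2$, a qubit with label $k_1 + k_2$ or $k_1 - k_2$. The $d$-dimensional twist is that we want to zero out the low-order bits of each of the $d$ coordinates of the combined label. The key combinatorial step is to show that by repeatedly pairing up qubits whose labels agree on a growing suffix of low-order bits — across all $d$ coordinates simultaneously — after $r$ rounds of combination we obtain qubits whose labels are supported only on the top $n - \Theta(r)$ bits of each coordinate, while the number of surviving qubits per round shrinks by only a constant factor. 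Choosing the round schedule and the collision-bucket sizes optimally gives the exponent $\sqrt{(2\log_2 3) d n}$: the factor $\log_2 3$ arises because the natural pairing yields three possible "sum/difference/collision" buckets, and the factor $d$ because all $d$ coordinates must be cleared in lockstep. Once the labels are reduced to a single high-order bit (in a chosen coordinate), measuring in the $\{\ket + , \ket -\}$ basis reveals one bit of $s$; permuting which coordinate and which bit is targeted, and re-running, recovers all $dn$ bits of $s$. Each bit costs $\poly(n) 2^{\sqrt{(2\log_2 3) d n}}$ qubits, hence that many queries, and since $d = O(1)$ the total is $O(n\, 2^{\sqrt{(2\log_2 3) d n}})$ queries as claimed.

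The main obstacle I expect is making the multi-dimensional sieve analysis rigorous: specifically, proving that the greedy pairing-by-suffix strategy really does reduce the ``active'' low bits at the claimed rate without the population of usable qubits collapsing, and pinning down the constant $\sqrt{2 \log_2 3}$ rather than an unspecified $2^{O(\sqrt{dn})}$. This requires carefully tracking a two-parameter potential (number of cleared bits versus expected number of survivors), optimising a recurrence, and controlling the variance so that a single run succeeds with constant probability rather than merely in expectation. A secondary, more bookkeeping-flavoured obstacle is confirming that the error tolerance in the hypothesis is set exactly so that the union bound over all consumed qubits and all $dn$ extracted bits leaves $\Omega(1)$ success probability; this is why the hypothesis carries the precise $n^{-2} 2^{-\sqrt{(2\log_2 3)dn}}$ factor. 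The runtime claim (same up to $\poly(n)$) then follows because every non-query operation is a Fourier transform over $\Z_{2^n}^d$, a few controlled-NOTs, or a measurement, each implementable in $\poly(n)$ gates, and the classical sieve bookkeeping (matching qubits by label suffix) costs $\poly(n)$ per combination step.
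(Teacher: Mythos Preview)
Your overall architecture matches the paper's: produce phase qubits $\ket{\psi_r} = \frac{1}{\sqrt 2}(\ket 0 + \omega^{r\cdot s}\ket 1)$ by querying $f,g$ in superposition (controlled on an ancilla bit) and Fourier-sampling, run a Kuperberg-style sieve that zeroes low-order bits of $r$ across all $d$ coordinates simultaneously, read off bits of $s$, and handle the approximate case by a trace-distance argument. But your account of the constant $\sqrt{2\log_2 3}$ is wrong, and without the correct mechanism you will not be able to prove the stated bound.

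The combination operation has only \emph{two} outcomes, $r\pm t$ each with probability $1/2$; there are no ``three sum/difference/collision buckets''. The $3$ enters because the paper \emph{reuses failed combinations}: if combining $\ket{\psi_r},\ket{\psi_t}$ inside a bin fails (yielding $\ket{\psi_{r+t}}$), that state stays in the same bin and is re-paired in the same stage. Iterating, the per-stage throughput rises from $1/4$ (discard failures, as in Kuperberg's first algorithm) to $1/4 + 1/16 + \cdots = 1/3$, giving the recurrence $L_{i+1} \gtrsim L_i/3 - t_i\,2^{db_i}$. Optimising the block schedule $b_i$ so that every term $3^i 2^{db_i}$ equals $2^{c\sqrt n}$ then forces $c=\sqrt{(2\log_2 3)d}$. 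Without the reuse idea you would only get $c=2\sqrt d$.

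Two smaller points. Your opening line about reducing to ``$d$ essentially independent one-dimensional problems'' is not what happens (and would give exponent $\sqrt n$, not $\sqrt{dn}$); your later ``clear all $d$ coordinates in lockstep'' is the correct picture, so drop the first framing. And the paper does not target one bit per sieve run: a single run leaves $r$ uniform in $\{0,2^{n-1}\}^d$, so $O(d)$ Hadamard measurements recover all $d$ low-order bits of $s$ at once; one then recurses to $\Z_{2^{n-1}}^d$ via $f'(x)=f(2x+a)$, $g'(x)=g(2x+a-\beta)$ with $a$ chosen \emph{uniformly at random}. That randomness, combined with Markov's inequality over the $O(n)$ recursion levels, is what produces the second factor of $n^{-1}$ in the error tolerance---it is not a pure union bound over consumed qubits.
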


We prove Theorem \ref{thm:dmatchapprox} later, in Section \ref{sec:dihedral}. We show here that it implies Theorem \ref{thm:main}, which we restate for convenience.

\begin{repthm}{thm:main}
Fix $d = O(1)$. Let $T : [n]^d \rightarrow \Sigma$ and $P : [m]^d \rightarrow \Sigma$ satisfy $\upsilon(T,m),\upsilon(P) \le \nu \le m/2$, for some $\nu$. Further assume that, for every offset $s$ such that $P$ does not match $T$ at that offset, the fraction of positions $x \in [m]^d$ where $P(x) \neq T(x+s)$ is at least $\gamma$. Then there is a bounded-error quantum algorithm which outputs $s \in [m]^d$ such that $P$ matches $T$ at offset $s$, if such an $s$ exists; otherwise, the algorithm outputs ``not found''. The algorithm makes
\[ O\left( \left(\frac{n\,\log^2 m\,2^{\sqrt{(2\log_2 3)d\log_2 m}}}{m}\right)^{d/2} \left( \nu^d \log m\,2^{\sqrt{(2\log_2 3)d\log_2 m}} + \frac{1}{\sqrt{\gamma}} \right) \right) \]
queries to each of $T$ and $P$. The runtime is the same up to a $\polylog(m)$ factor.
\end{repthm}

\begin{proof}
Let $m'$ be the largest power of 2 less than or equal to $m - \nu$. The algorithm is based on the following procedure {\sf RoughCheck}, which takes as input a shift $t \in [n-\nu - m'+ 2]^d$ and a tolerance $\epsilon \in (0,1]$, and is designed to accept if $t$ is quite close to a position $t'$ at which $P$ matches $T$:
\begin{enumerate}
\item Apply the algorithm of Theorem \ref{thm:dmatchapprox} to $T' := (T^{\triangleright \nu})_{t,m'}$ and $P' := (P^{\triangleright \nu})_{0,m'}$. Let $\ell \in [m']^d$ be the offset output by the algorithm. If there exists $i$ such that $\ell_i > \epsilon m'$, reject.
\item Otherwise, apply {\sf Check} to $T_{t+\ell,m}$ and $P$, and accept if and only if it accepts.
\end{enumerate}
In this definition, the notation is as used in Section \ref{sec:results}: thus $(T^{\triangleright \nu})_{t,m'}$ denotes the $m' \times \dots \times m'$ substring of $T^{\triangleright \nu}$ starting at offset $t$, and $(P^{\triangleright \nu})_{0,m'}$ is the first $m' \times \dots \times m'$ characters of $P^{\triangleright \nu}$. Note that $T'$ and $P'$ are injective, so Theorem \ref{thm:dmatchapprox} can indeed be applied to them. It is immediate that {\sf RoughCheck} uses $O(\nu^d \log m 2^{\sqrt{(2\log_2 3)d\log_2 m}} + 1/\sqrt{\gamma})$ queries, where the $O(\nu^d)$ term comes from simulating a query to $T^{\triangleright \nu}$ using $\nu^d$ queries to $T$. We now show that, for $\epsilon = O((\log^{-2}_2 m')2^{-\sqrt{(2\log_2 3)d\log_2 m'}})$, {\sf RoughCheck} is a bounded-error verifier for the property of $P$ matching $T$ at some offset $t' \in [n-m+1]^d$, where $t_i \le t'_i \le t_i+ \epsilon m'$ for all $i \in \{1,\dots,d\}$. Call this property {\em $\epsilon$-matching}. First assume there does exist such an offset $t'$. Then $P^{\triangleright \nu}$ also matches $T^{\triangleright \nu}$ at the same offset. Taking addition modulo $m'$ in each dimension,
\[ |\{x \in [m']^d :P'(x) \neq T'(x+t'-t)\}| \le \sum_{i=1}^d (t'_i - t_i)(m')^{d-1} = O((m')^d \epsilon), \]
where the first inequality is a rough bound on the size of the complement of one $d$-dimensional cube within another, in terms of $(d-1)$-dimensional slices. The algorithm of Theorem \ref{thm:dmatchapprox} therefore outputs $\ell = t'-t$ with bounded failure probability, and if it does so then {\sf Check} accepts with certainty given the two strings $T_{t',m}$ and $P$. On the other hand, if there is no such offset $t'$, there are two possibilities: the algorithm of Theorem \ref{thm:dmatchapprox} could output a correct match between $P$ and $T$, but at an offset $t'$ which fails to satisfy $t_i \le t'_i \le t_i+ \epsilon m'$ for at least one $i$; or otherwise, the algorithm could output an incorrect claimed match. In the former case, this will be detected by the check in step 1. In the latter case, {\sf Check} will detect this with bounded error. We therefore see that, if $P$ $\epsilon$-matches $T$ at offset $t$, {\sf RoughCheck} accepts except with bounded failure probability; whereas if $P$ does not $\epsilon$-match $T$ at offset $t$, {\sf RoughCheck} rejects except with bounded failure probability.

If we pick $t$ at random from $[n-\nu - m'+ 2]^d$, and $P$ matches $T$ at at least one offset, the probability that $P$ matches $T$ at some offset $t'$, where $t_i \le t'_i \le t_i+ \epsilon m'$, is at least $(\epsilon m'/n)^d$. Applying the bounded-error version of Grover's search algorithm~\cite{hoyer03}, we can find a position at which $P$ $\epsilon$-matches $T$ with $O((n/(\epsilon m'))^{d/2})$ uses of {\sf RoughCheck}, and again with bounded failure probability. Once such a position is found, one more use of the algorithm of Theorem \ref{thm:dmatchapprox} suffices to output the offset within this range at which $P$ matches $T$. The claimed result for the number of queries follows, substituting the value of $\epsilon$ back in and using $m' = \Omega(m)$. The argument for the runtime bound is similar.
\end{proof}

Note that there is also an algorithm which does not need to know the value of $\nu$ in advance, at the expense of a small additional runtime factor. We simply run the algorithm of Theorem \ref{thm:main} multiple times, doubling a guess for $\nu$ each time. Each time that the algorithm claims we have a match, we can use {\sf Check} to determine if it really is a match. To achieve a sufficiently small probability of failure, we need to repeat {\sf Check} at most $O(\log \nu)$ times. We can also get an algorithm with no dependence on $\gamma$ if we make some slightly different assumptions (cf.\ Lemma \ref{lem:injlb1} below for why these assumptions are necessary).

\begin{thm}
\label{thm:main2}
Let $T : [n]^d \rightarrow \Sigma$ and $P \in [m]^d \rightarrow \Sigma$ satisfy $\upsilon(T),\upsilon(P) \le \nu \le m/2$, for some $\nu$. Further assume that $P$ matches $T$ at some position $i$ (necessarily unique). Then there is a bounded-error quantum algorithm which outputs $i$ and makes
\[ O\left( \left(\frac{n}{m}\right)^{d/2} \nu^d \log^{d+1} m\,2^{\sqrt{(2\log_2 3)d\log_2 m}(d/2+1)} \right) \]
queries to each of $T$ and $P$. The runtime is the same up to a $\polylog(m)$ factor.
\end{thm}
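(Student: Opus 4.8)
The plan is to reduce Theorem~\ref{thm:main2} to Theorem~\ref{thm:main} by replacing the ``far from matching'' promise (parametrised by $\gamma$) with an explicit guarantee that comes for free from the stronger injectivity hypothesis $\upsilon(T) \le \nu$ (rather than just $\upsilon(T,m) \le \nu$) together with the promise of a genuine match. The key observation is this: if $\upsilon(T) \le \nu$, then $T^{\triangleright \nu}$ is globally injective, so every two distinct length-$\nu$ windows of $T$ differ somewhere. Hence at any offset $s$ where $P$ does \emph{not} match $T$, consider the corresponding windowed strings; since the true match offset $i$ exists and is unique, $s \ne i$, and the injectivity of $T^{\triangleright\nu}$ forces the windowed pattern and text to disagree at a constant fraction of positions once we choose the window size appropriately. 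More carefully, I would run the {\sf RoughCheck}-style procedure of Theorem~\ref{thm:main} but, in place of the final {\sf Check} call with parameter $1/\sqrt{\gamma}$, verify a claimed near-match by checking equality on the injectivised strings $(T^{\triangleright\nu})_{t+\ell,m'}$ and $(P^{\triangleright\nu})_{0,m'}$. Because these are injective, a single disagreement is enough to rule out an offset, but we cannot detect a single disagreement cheaply with Grover; instead, I would exploit that a \emph{wrong} offset on injective strings disagrees at $\Omega((m')^d)$ positions — this plays the role of $\gamma = \Omega(1)$.

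Concretely, the steps are: (i) set $m'$ to the largest power of two at most $m - \nu$, as before, and set $\epsilon$ to the same $O(\log^{-2}_2 m' \cdot 2^{-\sqrt{(2\log_2 3)d\log_2 m'}})$ threshold; (ii) define {\sf RoughCheck}$(t)$ exactly as in the proof of Theorem~\ref{thm:main}, applying Theorem~\ref{thm:dmatchapprox} to $T' = (T^{\triangleright\nu})_{t,m'}$ and $P' = (P^{\triangleright\nu})_{0,m'}$, rejecting if the returned offset $\ell$ has any coordinate exceeding $\epsilon m'$; (iii) for the second stage, instead of {\sf Check}, directly compare $(P^{\triangleright\nu})_{0,m'}$ with $(T^{\triangleright\nu})_{t+\ell,m'}$ at every position — but note that we only need to run this comparison a logarithmic number of times total if we first use Grover to locate a candidate $t$ and then verify; (iv) wrap {\sf RoughCheck} in bounded-error Grover over $t \in [n-\nu-m'+2]^d$, which succeeds after $O((n/(\epsilon m'))^{d/2})$ iterations because the probability a random $t$ is $\epsilon$-close to the unique true offset $i$ is at least $(\epsilon m'/n)^d$; (v) once a good $t$ is found, one more call to Theorem~\ref{thm:dmatchapprox} pins down $i$ exactly.

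The accounting is then: each {\sf RoughCheck} call costs $O(\nu^d \cdot \log m \cdot 2^{\sqrt{(2\log_2 3)d\log_2 m}})$ queries (the $\nu^d$ factor simulating a query to $T^{\triangleright\nu}$, the rest from Theorem~\ref{thm:dmatchapprox}); the Grover wrapper multiplies by $O((n/(\epsilon m'))^{d/2})$; substituting $1/\epsilon = O(\log^2 m \cdot 2^{\sqrt{(2\log_2 3)d\log_2 m}})$ and $m' = \Omega(m)$ gives a factor $(n/m)^{d/2} \log^d m\, 2^{(d/2)\sqrt{(2\log_2 3)d\log_2 m}}$. Multiplying the two pieces yields $(n/m)^{d/2} \nu^d \log^{d+1} m \cdot 2^{\sqrt{(2\log_2 3)d\log_2 m}(d/2+1)}$, matching the stated bound; the verification stage is absorbed into a $\polylog m$ factor. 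The main obstacle — and the only place the argument genuinely differs from Theorem~\ref{thm:main} — is justifying that a \emph{wrong} claimed match on the injective windowed strings can be rejected without a $1/\sqrt{\gamma}$-type cost. Here I would argue that, because $(P^{\triangleright\nu})_{0,m'}$ is injective and we have a \emph{promised} true match at $i$, the possible failure modes of step~(ii) are exactly as in Theorem~\ref{thm:main}: either Theorem~\ref{thm:dmatchapprox} returns a correct offset outside the $\epsilon m'$ window (caught by the coordinate test, which uses no queries), or it returns an incorrect offset, which means $(T^{\triangleright\nu})_{t+\ell,m'} \ne (P^{\triangleright\nu})_{0,m'}$ as \emph{strings} — and since both are built from injective $\triangleright\nu$-maps, a straightforward counting argument shows they disagree at a positive fraction of positions, so a constant number of random probes of the windowed strings (i.e.\ {\sf Check} with $\gamma = \Omega(1)$) detects this with bounded error, contributing only $O(1)$ queries per call and $O(\log\nu)$ repetitions for overall error control.
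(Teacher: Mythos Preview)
Your approach is essentially the paper's: replace the $\gamma$-dependent {\sf Check} by a cheap verification that exploits global injectivity of $T^{\triangleright\nu}$ (available here because $\upsilon(T)\le\nu$, not merely $\upsilon(T,m)\le\nu$), wrapped inside the same Grover-over-{\sf RoughCheck} structure. The query count matches.

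The paper's execution of the verification step is simpler than yours. You propose running {\sf Check} with $\gamma=\Omega(1)$, justified by a ``counting argument'' that a wrong offset makes the injectivised strings disagree at a positive fraction of positions. In fact the conclusion is stronger and the argument is one line: if $P$ matches $T$ at $i$ then $P^{\triangleright\nu}(x)=T^{\triangleright\nu}(i+x)$ for every $x$, so at a wrong offset $t+\ell\ne i$ global injectivity of $T^{\triangleright\nu}$ gives $T^{\triangleright\nu}(t+\ell+x)\ne T^{\triangleright\nu}(i+x)=P^{\triangleright\nu}(x)$ for \emph{every} $x$. Hence a single deterministic comparison at $x=0$ suffices --- check whether $T^{\triangleright\nu}(t+\ell)=P^{\triangleright\nu}(0)$ --- and this is exactly what the paper's {\sf RoughCheck2} does. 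Your $O(1)$ random probes are correct but unnecessary, and your step~(iii) phrasing (``compare \dots at every position'') should be dropped in favour of this one-position test. No counting is needed; it is pure injectivity plus the promised match.
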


\begin{proof}
Let $m'$ be the largest power of 2 less than or equal to $m - \nu$. The algorithm is based on the following procedure {\sf RoughCheck2}, which takes as input a shift $t \in [n-\nu - m'+ 2]^d$ and a tolerance $\epsilon \in (0,1]$:
\begin{enumerate}
\item Apply the algorithm of Theorem \ref{thm:dmatchapprox} to $T' := (T^{\triangleright \nu})_{t,m'}$ and $P' := (P^{\triangleright \nu})_{0,m'}$. Let $\ell$ be the offset output by the algorithm. If there exists $i$ such that $\ell_i > \epsilon m'$, reject.
\item If $T^{\triangleright \nu}(t + \ell) = P^{\triangleright \nu}(0)$, output $\ell$. Otherwise, reject.
\end{enumerate}
The analysis is the same as for Theorem \ref{thm:main}, replacing {\sf RoughCheck} with {\sf RoughCheck2}, which uses $O(\nu^d \log m 2^{\sqrt{(2\log_2 3)d\log_2 m}})$ queries (and has no dependence on $\gamma$). The key difference is that, if $T^{\triangleright \nu}(t + \ell) \neq P^{\triangleright \nu}(0)$, we can be sure that $t + \ell$ is not the position where $P$ matches $T$. This follows from injectivity of $T^{\triangleright \nu}$ and $P^{\triangleright \nu}$ and the fact that $P$ is indeed contained somewhere within $T$.
\end{proof}


\section{Pattern matching in random strings}
\label{sec:random}

We now show that Theorem \ref{thm:main} can be applied to random patterns and texts. This simply involves calculating the parameters required to apply the theorem. First we show that random strings can be made injective by only considering a small number of subsequent positions. A similar result was previously shown in a more general context by Gharibi~\cite{gharibi13}, using a different notion of injectivisation. Recall that we write $q = |\Sigma|$.

\begin{lem}
\label{lem:randominj}
Let $S : [n]^d \rightarrow \Sigma$ be uniformly random. Then $\Pr[\upsilon(S) \ge (3d \log_q n)^{1/d}] \le 1/n^d$.
\end{lem}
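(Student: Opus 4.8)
The plan is to use a union bound over all pairs of distinct starting positions of length-$k$ blocks, where $k = \lceil (3d\log_q n)^{1/d}\rceil$. Recall $\upsilon(S) \ge k$ means that $S^{\triangleright k}$ is \emph{not} injective, i.e.\ there exist two distinct positions $s, s' \in [n-k+1]^d$ with $S_{s,k} = S_{s',k}$ as functions $[k]^d \to \Sigma$. So it suffices to bound $\Pr[\exists s \neq s' : S_{s,k} = S_{s',k}]$.

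The key step is to estimate, for a fixed pair $s \neq s'$, the probability that the two $k\times\dots\times k$ sub-arrays agree everywhere. Since $s \neq s'$, the two blocks are genuinely different sub-arrays, but they may overlap, which is the one subtlety to handle. I would argue as follows: because $s \neq s'$, there is at least one coordinate direction $i$ in which $s_i \neq s_i'$; say WLOG $s_i < s_i'$. Consider the "shift by $s' - s$" map on the grid of positions covered by block $s$. One can find a subset $I \subseteq [k]^d$ of size at least $k^{d-1}$ (a single $(d{-}1)$-dimensional slice suffices: e.g.\ fix the $i$-th coordinate of $x$ to $0$, giving $k^{d-1}$ positions, and note that for these the constraint $S(s+x) = S(s'+x)$ chains the values $S(s+x), S(s'+x), S(s+2(s'-s)+x),\dots$ but the cleanest bound is just that the events "$S(s+x)=S(s'+x)$" for $x$ ranging over an appropriately chosen "staircase" are mutually independent and each has probability $1/q$). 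Concretely: pick any maximal set of positions $x^{(1)},\dots,x^{(r)} \in [k]^d$ such that all the $2r$ positions $s + x^{(j)}$, $s' + x^{(j)}$ are distinct; a simple greedy/counting argument gives $r \ge k^d / 2 \ge k^{d-1}$ say, but in fact one can do better and get $r$ proportional to $k^d$. For independence it is enough that the $2r$ grid cells are distinct, and then $\Pr[S_{s,k} = S_{s',k}] \le q^{-r} \le q^{-k^{d-1}}$ — and with a slightly more careful argument one gets the stronger $q^{-(1-o(1))k^d}$, but $q^{-k^{d-1}}$ is not quite enough, so the right bound to aim for is $\Pr[S_{s,k}=S_{s',k}] \le q^{-k^d/2}$ via a "checkerboard" selection of independent cells, or even cleaner, $\le q^{-k^d + (\text{overlap})}$; since the overlap is at most $k^d - k^{d-1}$ when the shift is a unit vector, this gives at worst $q^{-k^{d-1}}$. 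Let me instead use the clean bound: for \emph{any} $s\neq s'$, the number of cells $x$ with $s+x$ outside the block at $s'$ is at least $k^{d-1}$ (one boundary slice), and on those cells $S(s+x)$ is independent of everything in block $s'$; intersecting with the remaining $\ge k^{d-1}$ independent coincidence constraints inside the overlap region handled by a chain argument, the total is still $\ge$ something like $k^d/3^d$ independent $1/q$ events. The honest statement: $\Pr[S_{s,k} = S_{s',k}] \le q^{-c_d k^d}$ for a constant $c_d$, and choosing $k$ with $c_d k^d \ge 3d\log_q n$ — which, absorbing $c_d$ into the constant, is exactly $k \gtrsim (3d\log_q n)^{1/d}$ — makes each term at most $n^{-3d}$.

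Then I finish with the union bound: there are at most $(n^d)^2 = n^{2d}$ pairs $(s,s')$, so
\[ \Pr[\upsilon(S) \ge k] \le n^{2d} \cdot q^{-c_d k^d} \le n^{2d} \cdot n^{-3d} = n^{-d}, \]
as claimed (after checking that the stated threshold $k = (3d\log_q n)^{1/d}$ indeed gives $q^{-c_d k^d} \le n^{-3d}$ with the constant chosen so the arithmetic works; if the constant $3$ in the lemma is tight one must track $c_d$ carefully and the independent-cell count must be \emph{exactly} $\ge k^d$, which forces the chain/staircase argument rather than the crude one-slice bound).

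\textbf{Main obstacle.} The delicate point is handling overlapping blocks: when $s$ and $s'$ are close (e.g.\ differ by a unit vector in one coordinate), the two sub-arrays share most of their cells, so one cannot naively multiply $k^d$ independent $1/q$ factors. The fix is a telescoping/chain argument — if block $s$ equals block $s'$ and they are related by the shift vector $v = s'-s$, then the values of $S$ are forced to be constant along arithmetic progressions in direction $v$ within the relevant region, and one counts the number of \emph{free} cells (one representative per progression) that still lie inside both blocks; this count is at least $k^{d-1}$ and, with care about which progressions are fully contained, can be pushed to $\Theta(k^d)$. Getting this count to be at least $3d\log_q n / \log_q q^{1} $ — i.e.\ matching the exact constant $3$ in the lemma statement — is where all the real work (and any slack in the constant) lives. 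Everything else is a routine union bound.
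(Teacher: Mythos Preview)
Your union-bound framework matches the paper exactly, and you have correctly identified that the only subtlety is overlapping blocks. But you have not closed the argument: the bounds you actually commit to, $q^{-k^{d-1}}$ or $q^{-c_d k^d}$ with unspecified $c_d<1$, do not yield the lemma with its stated constant $3$, as you yourself acknowledge.

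The point you are missing is that overlap is not an obstacle at all: for \emph{any} nonzero shift $\delta$ one has $\Pr[S_{s,k}=S_{s+\delta,k}]=q^{-k^d}$ exactly. The paper's argument is essentially one line: reveal the values $S(x)$ one at a time in an order such that $S(t)$ is always revealed before $S(t+\delta)$ (e.g.\ any linear extension of the coordinate-wise partial order when $\delta\ge 0$ componentwise). Each of the $k^d$ constraints $S(t)=S(t+\delta)$ is decided at the moment $S(t+\delta)$ is revealed; at that moment $S(t)$ is already fixed and $S(t+\delta)$ is uniform and independent of the past, so the conditional probability is $1/q$. The $k^d$ target cells $\{t+\delta:t\in s+[k]^d\}$ are distinct, so these $k^d$ factors multiply to $q^{-k^d}$. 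Then $n^{2d}\cdot q^{-k^d}\le n^{-d}$ once $k^d\ge 3d\log_q n$, with no slack needed in the constant.

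Your chain/progression idea would in fact give the same exact bound if carried through, and it is worth seeing why: the constraint graph on the vertex set $(s+[k]^d)\cup(s+\delta+[k]^d)$, with an edge from $t$ to $t+\delta$ for each $t\in s+[k]^d$, is a disjoint union of paths (every vertex has in- and out-degree at most $1$), hence a forest with exactly $k^d$ edges; the probability that $S$ is constant on each component is therefore $q$ to the power (number of components minus number of vertices), which equals $q^{-k^d}$. The quantity to count is the number of \emph{edges}, which is always $k^d$ regardless of overlap---not ``free cells inside both blocks''. Once you see this, all the hedging about $c_d$, checkerboards, and one-slice bounds falls away.
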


\begin{proof}
Consider the string $S^{\triangleright k}$ for arbitrary $k$. For any offset $s \in [n-k]^d$ and non-zero $\delta \in [n]^d$, the probability over $S$ that $S^{\triangleright k}(s) = S^{\triangleright k}(s+\delta)$ is
\[ \Pr_S\left[\bigwedge_{t \in s + [k]^d} (S(t) = S(t+\delta))\right]. \]
This probability is exactly $q^{-k^d}$, whether or not there exist a pair $t$, $t+\delta$ that are both contained in $s + [k]^d$. To see why, observe that we can think of choosing $S$ by fixing its values $S(x)$ one by one, in some arbitrary order such that if $x_i \le y_i$ for all $i \in \{1,\dots,d\}$, $S(x)$ is fixed before $S(y)$. Then, however the value $S(t)$ was chosen, the value of $S(t+\delta)$ is uniformly random. So $\Pr_S[S(t) = S(t+\delta)] = 1/q$. Taking the union bound over all possible choices of $s$ and $\delta$, we obtain an upper bound of $n^{2d} q^{-k^d}$ on the probability that $S^{\triangleright k}$ is non-injective. Choosing $k = \lceil (3d \log_q n)^{1/d} \rceil$ gives the claimed bound.
\end{proof}

We can also show that random strings are unlikely to be close to matching at any offset.

\begin{lem}
\label{lem:randommatch}
Let $T : [n]^d \rightarrow \Sigma$ be arbitrary, and let $P:[m]^d \rightarrow \Sigma$ be uniformly random. Then the probability that there exists an offset $s \in [n-m]^d$ such that $|\{t : P(t) = T(s+t)\}| \ge (3/4)m^d$ is at most $n^d e^{-m^d/8}$.
\end{lem}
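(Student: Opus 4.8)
The plan is a direct union bound over offsets combined with a Chernoff/Hoeffding tail bound for each fixed offset. First I would fix an arbitrary offset $s \in [n-m]^d$ and consider the $m^d$ indicator random variables $X_t := [P(t) = T(s+t)]$ for $t \in [m]^d$. Since $P$ is uniformly random and $T$ is fixed, each $X_t$ is an independent Bernoulli variable with mean exactly $1/q \le 1/2$ (independence holds because the values $P(t)$ are chosen independently across distinct positions $t$, and $X_t$ depends only on $P(t)$). Hence $\E\left[\sum_t X_t\right] = m^d/q \le m^d/2$.

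Next I would apply a Hoeffding/Chernoff bound to control the probability that $\sum_t X_t$ exceeds $(3/4)m^d$, which is at least $m^d/4$ above its mean (using $q \ge 2$). A convenient form: for a sum of $N$ independent $[0,1]$-valued random variables with mean $\mu N$ and any $a \ge \mu$, $\Pr[\sum X_t \ge aN] \le e^{-2(a-\mu)^2 N}$; with $N = m^d$, $a = 3/4$, $\mu \le 1/2$ this gives $\Pr[\sum_t X_t \ge (3/4)m^d] \le e^{-2(1/4)^2 m^d} = e^{-m^d/8}$, exactly the per-offset bound needed. (One could equally use the multiplicative Chernoff bound and absorb constants; the additive form lands on the stated constant cleanly.)

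Finally I would take the union bound over all offsets $s \in [n-m]^d$. There are at most $n^d$ such offsets, so the probability that some offset has $|\{t : P(t) = T(s+t)\}| \ge (3/4)m^d$ is at most $n^d e^{-m^d/8}$, as claimed.

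The only mild subtlety — and the one point worth stating carefully rather than the main obstacle, since there really isn't one — is the independence claim: for a \emph{single fixed} $s$, the variables $X_t$ are genuinely independent because distinct $t$ index distinct cells of $P$, even though $T$ may repeat values; it is across different offsets $s$ that correlations appear, which is why we resort to a union bound there rather than trying to be cleverer. Everything else is a routine tail estimate, so I would keep the write-up to the three steps above.
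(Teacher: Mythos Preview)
Your proposal is correct and matches the paper's proof essentially line for line: fix an offset, apply the additive Hoeffding/Chernoff bound $e^{-2\delta^2/m^d}$ with $\delta = m^d/4$ (using $q\ge 2$ so the mean is at most $m^d/2$), then union bound over the at most $n^d$ offsets. Your explicit remark about independence within a fixed offset versus correlations across offsets is a nice clarification the paper leaves implicit, but otherwise there is nothing to add.
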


\begin{proof}
For a fixed offset $s$, using a Chernoff bound we have
\[ \Pr_P [ |\{t : P(t) = T(s+t) \}| \ge m^d/q + \delta ] \le e^{-2\delta^2/m^d}. \]
Taking a union bound over all offsets $s$,
\[ \Pr_P [ \exists s, |\{t : P(t) = T(s+t) \}| \ge m^d/q + \delta ] \le n^d e^{-2\delta^2/m^d}. \]
Using $q \ge 2$ and taking $\delta = m^d/4$,
\[ \Pr_P [ \exists s, |\{t : P(t) = T(s+t)\}| \ge (3/4)m^d ] \le n^d e^{-m^d/8}. \]
\end{proof}

If $m$ satisfies $m^d \ge 16d \ln n$, the bound obtained from Lemma \ref{lem:randommatch} is at most $1/n^d$. Note that a meaningful bound cannot be found for $m$ significantly smaller than this. If we take a random text $T:[n]^d \rightarrow \Sigma$, and divide it up into $(n/m)^d$ blocks of size $m^d$, the probability that an arbitrary pattern fails to match a given block is $1-q^{-m^d}$, so the probability that it fails to match all blocks is
\[ (1 - q^{-m^d})^{(n/m)^d} \le e^{- n^d /(m^d q^{m^d}) }. \]
This probability is small for $m^d = O(\log n)$. That is, if the pattern is too short, it is likely to ``unintentionally'' match the text somewhere.

Combining Lemmas \ref{lem:randominj} and \ref{lem:randommatch}, and inserting these parameters into Theorem \ref{thm:main}, we get the following result.

\begin{prop}
\label{prop:random}
Let $m = \omega(\log n)$ and fix $d=O(1)$. Let $T:[n]^d \rightarrow \Sigma$ be picked uniformly at random. Let $P : [m]^d \rightarrow \Sigma$ be picked either (a) by choosing an arbitrary $m \times \dots \times m$ substring of $T$, or (b) by choosing each element of $P$ uniformly at random from $\Sigma$. Then there is a quantum algorithm which makes
\[ O\left( \left(\frac{n}{m}\right)^{d/2} 2^{(d/2+1) \sqrt{(2\log_2 3)d\log_2 m}} \log^{d+1} m \log n \right) \]
queries to $T$ and $P$ and determines which is the case. The runtime is the same up to a $\polylog(m)$ factor. The algorithm fails with probability $O(1/n^d)$ over the choice of $T$ and $P$, and with an arbitrarily small probability over its own internal randomness. In case (a), the algorithm also outputs the position at which $P$ matches $T$.
\end{prop}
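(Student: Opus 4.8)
The plan is to verify that, in each of cases (a) and (b), the hypotheses of Theorem \ref{thm:main} hold with parameters $\nu = O((d\log_q m)^{1/d})$ and $\gamma = \Omega(1)$, except on an event of probability $O(1/n^d)$ over the choice of $T$ and $P$, and then simply substitute these values into the query bound of Theorem \ref{thm:main}. First I would bound the injectivity lengths. By Lemma \ref{lem:randominj} applied to the random text $T$, we have $\upsilon(T) \le (3d\log_q n)^{1/d}$ except with probability $1/n^d$; since $\upsilon(T,m) \le \upsilon(T)$ this also bounds $\upsilon(T,m)$. For case (b), Lemma \ref{lem:randominj} applied directly to the random pattern $P$ (with $n$ replaced by $m$) gives $\upsilon(P) \le (3d\log_q m)^{1/d}$ except with probability $1/m^d \le 1/n^d$. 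For case (a), $P$ is a substring of $T$, so any repeated value in $P^{\triangleright k}$ is also a repeated value in $T^{\triangleright k}$; hence $\upsilon(P) \le \upsilon(T,m) \le (3d\log_q n)^{1/d}$ on the same good event. In all cases we may take $\nu = \lceil (3d\log_q n)^{1/d} \rceil = O((d\log_q n)^{1/d})$, which is $o(m)$ and in particular $\le m/2$ since $m = \omega(\log n)$ and $d = O(1)$; note $\log_q n = O(\log n)$ contributes only a logarithmic factor, absorbed into the $\widetilde{O}$.

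Next I would establish the distance-to-match condition, i.e.\ produce a constant $\gamma$ such that every non-matching offset has at least a $\gamma$-fraction of mismatched positions. In case (b), Lemma \ref{lem:randommatch} (with $T$ arbitrary, $P$ uniformly random) says that, except with probability $n^d e^{-m^d/8} \le 1/n^d$ (using $m^d \ge m = \omega(\log n) \gg 16 d\ln n$), no offset $s$ has $|\{t : P(t) = T(s+t)\}| \ge (3/4)m^d$; equivalently every offset has at least a $1/4$-fraction of mismatches, so $\gamma = 1/4$ works and, moreover, $P$ does not match $T$ anywhere. In case (a) the situation is slightly different: $P$ \emph{does} match $T$ at the offset from which it was cut, and we need a lower bound on mismatches only at the \emph{other} offsets. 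Here I would again invoke a union bound over offsets $s$: for a fixed offset $s$ that is not the planted one, the overlap of $P$ with $T$ at $s$ is a portion of $T$; since $T$ is uniformly random, for each such $s$ the values $P(t)$ and $T(s+t)$ agree with probability $1/q$ independently over $t$ — the one subtlety is that $P$ itself is a window of $T$, so $P(t)$ and $T(s+t)$ are both entries of $T$ and could in principle coincide as cells. But whenever $s$ is not the planted offset, for at least a constant fraction of positions $t$ the two cells $P(t)$ and $T(s+t)$ are distinct cells of $T$, so a Chernoff bound exactly as in Lemma \ref{lem:randommatch} gives that the mismatch fraction is $\ge 1/4$ except with probability $e^{-\Omega(m^d)}$ per offset; a union bound over the $\le n^d$ offsets again yields a failure probability $O(1/n^d)$. (Alternatively, one can observe that conditioned on $P$, the relevant entries of $T$ outside the planted window are still uniform, which is cleaner; I would phrase it this way to avoid the dependency issue entirely.) Thus in case (a) as well we get $\gamma = 1/4$, and the algorithm of Theorem \ref{thm:main} both decides that a match exists and returns its location.

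Finally I would plug $\nu = O((d\log_q n)^{1/d})$ and $\gamma = 1/4$ into the query bound of Theorem \ref{thm:main},
\[
O\!\left( \left(\frac{n\,\log^2 m\,2^{\sqrt{(2\log_2 3)d\log_2 m}}}{m}\right)^{d/2} \left( \nu^d \log m\,2^{\sqrt{(2\log_2 3)d\log_2 m}} + \frac{1}{\sqrt{\gamma}} \right) \right),
\]
and simplify: the $\nu^d$ term contributes an $O(d\log_q n) = O(\log n)$ factor, the two $2^{\sqrt{(2\log_2 3)d\log_2 m}}$ factors combine to $2^{(d/2+1)\sqrt{(2\log_2 3)d\log_2 m}}$, and the $(\log^2 m)^{d/2} \cdot \log m = \log^{d+1} m$ factors collect as stated, giving the claimed
\[
O\!\left( \left(\frac{n}{m}\right)^{d/2} 2^{(d/2+1)\sqrt{(2\log_2 3)d\log_2 m}} \log^{d+1} m \, \log n \right);
\]
since $1/\sqrt{\gamma} = 2$ is dominated by the $\nu^d$ term, it disappears. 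The overall failure probability is a sum of $O(1)$ events each of probability $O(1/n^d)$, hence $O(1/n^d)$, and Grover/amplitude-amplification boosts the internal success probability arbitrarily close to $1$ at the cost of a constant factor. I expect the one genuinely delicate point to be the distance argument in case (a) — ensuring that mismatches at non-planted offsets are not spuriously suppressed by the fact that $P$ is itself carved out of $T$ — which is handled cleanly by conditioning on $P$ and using that the remaining entries of $T$ stay uniform; everything else is bookkeeping of logarithmic and $2^{O(\sqrt{\log m})}$ factors.
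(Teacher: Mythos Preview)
Your approach matches the paper's exactly: the paper's entire proof is the single sentence ``Combining Lemmas \ref{lem:randominj} and \ref{lem:randommatch}, and inserting these parameters into Theorem \ref{thm:main}'', and you have correctly unpacked what this entails and how the factors collect.

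Two small points. First, the inequality ``$1/m^d \le 1/n^d$'' is backwards (since $m \le n$). The fix is to run the \emph{proof} of Lemma \ref{lem:randominj} on $P$ with threshold $k = \lceil (3d\log_q n)^{1/d}\rceil$ rather than $\lceil (3d\log_q m)^{1/d}\rceil$, giving failure probability $m^{2d}q^{-k^d} \le n^{2d}/n^{3d} = 1/n^d$; this is the $\nu$ you end up using anyway. Second, you are right that the distance condition in case (a) is the one delicate point, and the paper does not spell it out either. Your conditioning-on-$P$ suggestion does not quite close the gap when the candidate offset $s$ heavily overlaps the planted window $s_0$, since then most cells $T(s+t)$ lie inside the window and are not fresh randomness. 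A clean finish is McDiarmid's inequality: for fixed $s \ne s_0$, the count $|\{t \in [m]^d : T(s_0+t) = T(s+t)\}|$ is a function of at most $2m^d$ independent entries of $T$, each affecting the count by at most $2$, with mean $m^d/q$; this yields the same $e^{-\Omega(m^d)}$ tail and the union bound over $n^d$ offsets goes through as before.
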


This is the first part of Theorem \ref{thm:random}.


\section{Lower bounds and classical upper bounds}
\label{sec:lower}

We now prove nearly matching quantum and classical lower bounds. We will actually lower-bound the complexity of the following variant of the pattern matching problem. We are given access to a pattern $P:[m]^d \rightarrow \Sigma$, and a text $T:[n]^d \rightarrow \Sigma$. We are promised that either there is a unique offset at which $P$ matches $T$, or there is no such offset. Our task is to determine which is the case. Call this the {\em pattern detection} problem. As with the quantum upper bounds discussed earlier, we impose the additional promise that, for every offset $s$ such that $P$ does not match $T$ at that offset, the fraction of positions $x \in [m]^d$ where $P(x) \neq T(x+s)$ is at least $\gamma$.

To prove a quantum lower bound for this problem, we will use the following result of Ambainis~\cite{ambainis02}.

\begin{thm}[Ambainis~\cite{ambainis02}]
\label{thm:adversary}
For any $f:S \subseteq \{0,1\}^n \rightarrow \{0,1\}$, let $X,Y \subseteq S$ be two sets of inputs such that $f(x) \neq f(y)$ for all $x \in X$ and $y \in Y$. Further let $R \subseteq X \times Y$ be such that
\begin{enumerate}
\item For every $x \in X$, there exist at least $\mu$ different $y \in Y$ such that $(x,y) \in R$.
\item For every $y \in Y$, there exist at least $\mu'$ different $x \in X$ such that $(x,y) \in R$.
\item For every $x \in X$ and $i \in \{1,\dots,n\}$, there are at most $\lambda$ different $y \in Y$ such that $(x,y) \in R$ and $x_i \neq y_i$.
\item For every $y \in Y$ and $i \in \{1,\dots,n\}$, there are at most $\lambda'$ different $x \in X$ such that $(x,y) \in R$ and $x_i \neq y_i$.
\end{enumerate}
Then any quantum algorithm computing $f(x)$ with bounded failure probability for all $x \in S$ uses $\Omega\left(\sqrt{\frac{\mu\mu'}{\lambda\lambda'}}\right)$ queries to the bits of $x$.
\end{thm}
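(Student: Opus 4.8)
The plan is to follow Ambainis's hybrid (``progress measure'') argument. Work in the standard quantum query model: on input $x \in \{0,1\}^n$ the algorithm applies a fixed sequence of input-independent unitaries interleaved with $T$ copies of the phase oracle $O_x : \ket{i}\ket{z} \mapsto (-1)^{x_i}\ket{i}\ket{z}$ (with a ``no-query'' symbol $i=0$, the convention $x_0 := 0$, and a workspace register $z$; the phase oracle is equivalent to the standard bit oracle up to a constant factor, so this loses nothing), and then measures a fixed two-outcome POVM $\{\Pi_0,\Pi_1\}$ to read off its guess for $f(x)$. Write $\ket{\psi^t_x}$ for the (unit) state after $t$ queries; absorbing the unitary following a query into the query, we have $\ip{\psi^{t+1}_x}{\psi^{t+1}_y} = \langle O_x\psi^t_x\mid O_y\psi^t_y\rangle$, and the initial state $\ket{\psi^0_x}$ is independent of $x$. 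Assume the algorithm is correct with probability $\ge 1-\epsilon$ on every input of $S$ for a fixed $\epsilon<1/2$; we will show $T = \Omega\big(\sqrt{\mu\mu'/(\lambda\lambda')}\big)$, the implied constant depending only on $\epsilon$.

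First I would record the output condition. For $x\in X$, $y\in Y$ we have $f(x)\ne f(y)$, so correctness forces $\|\Pi_{f(y)}\ket{\psi^T_x}\|^2 \le \epsilon$ and $\|\Pi_{f(x)}\ket{\psi^T_y}\|^2 \le \epsilon$; expanding $\ip{\psi^T_x}{\psi^T_y}$ against $\Pi_0+\Pi_1=I$, applying Cauchy--Schwarz, and performing a one-line trigonometric optimisation gives $|\ip{\psi^T_x}{\psi^T_y}| \le 2\sqrt{\epsilon(1-\epsilon)} =: c$, with $c<1$. Now define the progress measure
\[ W_t \;:=\; \sum_{(x,y)\in R} \ip{\psi^t_x}{\psi^t_y}. \]
Since $\ket{\psi^0_x}$ does not depend on the input, $W_0 = |R|$, while the output condition gives $|W_T| \le c\,|R|$, so $|W_0 - W_T| \ge (1-c)\,|R|$. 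It remains to bound the change in $W_t$ per query.

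Grouping the query register, write $\ket{\psi^t_x} = \sum_i \ket{i}\otimes \phi^t_{x,i}$ with $\sum_i \|\phi^t_{x,i}\|^2 = 1$. Using that $O_x$ multiplies block $i$ by $(-1)^{x_i}$, a direct computation gives
\[ \ip{\psi^{t+1}_x}{\psi^{t+1}_y} - \ip{\psi^t_x}{\psi^t_y} \;=\; -\,2\!\!\sum_{i\,:\,x_i\neq y_i}\!\! \langle\phi^t_{x,i},\,\phi^t_{y,i}\rangle . \]
Summing over $(x,y)\in R$, taking absolute values, bounding $|\langle\phi^t_{x,i},\phi^t_{y,i}\rangle|\le\|\phi^t_{x,i}\|\,\|\phi^t_{y,i}\|$, and applying Cauchy--Schwarz over the triple set $\mathcal T := \{(x,y,i):(x,y)\in R,\ x_i\neq y_i\}$,
\[ |W_{t+1} - W_t| \;\le\; 2\Big(\sum_{(x,y,i)\in\mathcal T}\|\phi^t_{x,i}\|^2\Big)^{1/2}\Big(\sum_{(x,y,i)\in\mathcal T}\|\phi^t_{y,i}\|^2\Big)^{1/2}. \]
By condition 3 each fixed $(x,i)$ lies in at most $\lambda$ triples of $\mathcal T$, so the first factor is at most $\big(\lambda\sum_{x\in X}\sum_i\|\phi^t_{x,i}\|^2\big)^{1/2} = \sqrt{\lambda|X|}$; by condition 4 the second factor is at most $\sqrt{\lambda'|Y|}$. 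Finally, conditions 1 and 2 give $|R|\ge\mu|X|$ and $|R|\ge\mu'|Y|$, hence $|X|\,|Y|\le|R|^2/(\mu\mu')$ and therefore $|W_{t+1}-W_t| \le 2\sqrt{\lambda\lambda'/(\mu\mu')}\,|R|$ for every $t$.

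Telescoping over the $T$ queries,
\[ (1-c)\,|R| \;\le\; |W_0 - W_T| \;\le\; \sum_{t=0}^{T-1}|W_{t+1}-W_t| \;\le\; 2T\sqrt{\frac{\lambda\lambda'}{\mu\mu'}}\;|R|, \]
so $T \ge \frac{1-c}{2}\sqrt{\mu\mu'/(\lambda\lambda')} = \Omega\big(\sqrt{\mu\mu'/(\lambda\lambda')}\big)$, as claimed; note that $|R|$ cancels, which is why the relation $R$ need only exist. I expect the main obstacle to be the per-query estimate: one must split $\ip{\psi^{t+1}_x}{\psi^{t+1}_y}-\ip{\psi^t_x}{\psi^t_y}$ into exactly the disagreeing-coordinate contributions and then organise the resulting triple sum so that the sparsity encoded by $\lambda,\lambda'$ (at most $\lambda$ neighbours $y$ of a fixed $(x,i)$, at most $\lambda'$ neighbours $x$ of a fixed $(y,i)$) combines with the state normalisation to produce precisely the factor $\sqrt{\lambda\lambda'}$ and nothing worse; once that is in place, the rest is bookkeeping with the relation $R$ via conditions 1 and 2.
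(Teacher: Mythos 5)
This theorem is not proved in the paper at all: it is imported verbatim from Ambainis's work \cite{ambainis02} and used as a black box in the proof of Lemma \ref{lem:injlb1}, so there is no in-paper argument to compare against. Your proof is a correct reconstruction of Ambainis's original hybrid/adversary argument: the output-condition bound $|\ip{\psi^T_x}{\psi^T_y}| \le 2\sqrt{\epsilon(1-\epsilon)}$, the progress measure over $R$, the per-query identity isolating the disagreeing coordinates, and the use of conditions 1--4 to bound the drop per query all match the standard proof; the only (harmless) cosmetic difference is that you apply a single global Cauchy--Schwarz over the triple set $\mathcal{T}$ where Ambainis splits each term $\|\phi^t_{x,i}\|\,\|\phi^t_{y,i}\|$ via a weighted AM--GM with weight $\sqrt{\lambda/\lambda'}$, and both routes give the same bound $2\sqrt{\lambda\lambda'/(\mu\mu')}\,|R|$ per query and hence $T=\Omega\bigl(\sqrt{\mu\mu'/(\lambda\lambda')}\bigr)$.
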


First we show that the quantum pattern matching algorithm given here is not far from optimal, and give a corresponding classical lower bound.

\begin{lem}
\label{lem:injlb1}
Any bounded-error quantum algorithm for the pattern detection problem must make $\Omega((n/m)^{d/2}/\sqrt{\gamma} )$ queries, even if $P$ is injective, $\upsilon(T,m)=1$, and $P$ is completely known in advance. Any randomised classical pattern matching algorithm for the same problem must make $\Omega((n/m)^d/\gamma)$ queries.
\end{lem}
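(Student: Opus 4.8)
The plan is to prove the two bounds separately, constructing in each case a family of hard instances of the pattern detection problem that meets all the stated promises ($P$ injective, $\upsilon(T,m)=1$, $P$ known in advance), and then applying the adversary method (Theorem~\ref{thm:adversary}) for the quantum bound and a direct information-theoretic / distinguishing argument for the classical bound. First I would set up the instance family: take the text $T$ to be built from $(n/m)^d$ disjoint blocks of side length $m$, where all but possibly one block is filled with a fixed ``generic'' pattern of distinct symbols (so that $\upsilon(T,m)=1$ is forced, since every $1\times\dots\times 1$ window already determines its position), and one distinguished block either equals $P$ exactly (a YES instance, giving a unique match) or is a corrupted copy of $P$ that differs from $P$ in exactly a $\gamma$ fraction of its $m^d$ positions (a NO instance, and one that still respects the $\gamma$-closeness promise at every offset, which I would check using the generic symbols in the other blocks to rule out accidental near-matches across block boundaries).

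For the quantum lower bound I would take $X$ to be the set of YES instances — indexed by which of the $(n/m)^d$ blocks carries the planted copy of $P$ — and $Y$ the set of NO instances, indexed by the block location together with the choice of which $\gamma m^d$ positions inside that block are corrupted (and how). The relation $R$ pairs $(x,y)$ when they place the planted/corrupted block in the same location; then from a fixed YES instance $x$ there are many compatible $y$ (choose the corruption pattern), giving a large $\mu$, and symmetrically $\mu'$ is at least $1$ (or larger, if $X$ also ranges over corruptions in the ``generic'' region, which I'd arrange so the counts come out cleanly). A query to a single text cell $i$ can only distinguish $x$ from $y$ if cell $i$ lies in the planted block and is one of the corrupted cells, and by a counting argument each such cell participates in at most a $\gamma$-like fraction of the relation, so $\lambda,\lambda'$ are down by a factor of order $(m/n)^d$ and $\gamma$ relative to $\mu,\mu'$; plugging into $\Omega(\sqrt{\mu\mu'/(\lambda\lambda')})$ then yields $\Omega((n/m)^{d/2}/\sqrt{\gamma})$. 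The classical bound is easier: distinguishing YES from NO amounts to locating the (corrupted or exact) planted block and, within it, certifying at least one corrupted position, and a standard adversary/adaptive argument shows a randomised algorithm must on average query $\Omega((n/m)^d)$ blocks to find the right one and $\Omega(1/\gamma)$ cells within the planted block before hitting a discrepancy, for a total of $\Omega((n/m)^d/\gamma)$ queries.

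I expect the main obstacle to be getting the combinatorial parameters $\mu,\mu',\lambda,\lambda'$ in the adversary relation to simultaneously account for \emph{both} the $(n/m)^d$ block-location uncertainty and the $\gamma m^d$-subset corruption uncertainty while still satisfying the uniformity conditions~3 and~4 of Theorem~\ref{thm:adversary} exactly (rather than merely on average). The cleanest route is probably to make $X$ and $Y$ fully symmetric under the relevant symmetry group — cyclic shifts permuting the blocks, and a transitive action on the corrupted subsets — so that the per-input and per-coordinate counts are automatically constant; then the bound drops out of the symmetry. A secondary point to be careful about is the encoding of alphabet symbols as bits so that Theorem~\ref{thm:adversary}, stated for boolean inputs, applies: since the alphabet size only affects the count by constant (or logarithmic) factors absorbed into the $\widetilde{\Omega}$, I would either work over a binary alphabet directly in the construction or note that padding each symbol to $O(\log q)$ bits changes nothing essential.
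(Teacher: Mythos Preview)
Your overall plan—reduce to a block-structured instance and apply the adversary method—is the same as the paper's, but your hard-instance construction has a genuine gap that loses exactly the multiplicative factor you are after.

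In your construction, only \emph{one} block is special (either $P$ or a $\gamma$-corrupted $P$) while the remaining $(n/m)^d-1$ blocks are ``generic'' with symbols distinct from those of $P$. This means a single query into any block already reveals whether that block is the special one. Consequently an algorithm can locate the special block with $O((n/m)^d)$ classical or $O((n/m)^{d/2})$ quantum queries and then test it with $O(1/\gamma)$ or $O(1/\sqrt{\gamma})$ further queries, so your family only supports \emph{additive} lower bounds $\Omega((n/m)^{d/2}+1/\sqrt{\gamma})$ and $\Omega((n/m)^d+1/\gamma)$. In adversary-method terms: with your relation $R$ pairing $(x,y)$ at the same block location, each NO instance $y$ has a distinguished block, so $\mu'=1$; the $(n/m)^d$ factor never enters $\sqrt{\mu\mu'/(\lambda\lambda')}$. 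Symmetrising by cyclic shifts of the blocks does not help, because your ``generic'' blocks must all be different from one another to keep $\upsilon(T,m)=1$, so the instance is not invariant under those shifts.

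The missing idea is that the NO instances must have \emph{no} distinguished block: every block should look like a $\gamma$-corrupted copy of $P$. Concretely, the paper takes $\Sigma=[2m]^d$, $P(x)=x$, and builds $T$ from $(n/m)^d$ copies of $P$; in each block independently one either leaves it alone or corrupts a $\gamma$-fraction of cells by adding $m$ in every coordinate (this preserves injectivity of every $m\times\cdots\times m$ window, so $\upsilon(T,m)=1$). A YES instance has exactly one uncorrupted block; a NO instance has all blocks corrupted. Now a NO instance $y$ is in relation with YES instances obtained by uncorrupting \emph{any} of its $(n/m)^d$ blocks, giving $\mu'=(n/m)^d$, while $\mu=\binom{m^d}{\gamma m^d}$, $\lambda=\binom{m^d-1}{\gamma m^d-1}$, $\lambda'=1$, and the adversary bound is $\Omega((n/m)^{d/2}/\sqrt{\gamma})$. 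The same construction gives the classical product bound directly: the algorithm cannot rule out a block until it sees a corrupted cell there, which takes $\Omega(1/\gamma)$ queries per block, and it must do this for $\Omega((n/m)^d)$ blocks.
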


\begin{proof}
Both the quantum and classical lower bounds are based on the same hard input distribution. We use the alphabet $\Sigma = [2m]^d$. Set $P(x_1,\dots,x_d) = (x_1,\dots,x_d)$ and fix $n = mp$ for some integer $p$. Divide $T$ into $p^d$ blocks of size $m \times \dots \times m$, with each block initially being equal to $P$. Within each block, either change an arbitrary $\gamma$ fraction of the elements of $T$ by adding $m$ to each component of the sequence, or change none of them. Then $P$ only matches $T$ at one or more position at offsets given by blocks, and within each block except one fails to match $T$ at a $\gamma$ fraction of positions. The only information given by querying elements of the text is whether that element is equal to the corresponding element of $P$, or not. We can therefore think of the text as an $n^d$-bit string, which is divided into $p^d$ blocks of size $m \times \dots \times m$; within each block the string either takes the value 1 at a $\gamma$ fraction of positions, or at no positions. The goal is to determine whether there exists a block where the string is equal to 0 at all positions. This is equivalent to evaluating a 2-level OR-AND tree with a promise on the number of 1's in each block.

For the quantum lower bound, we now apply Theorem \ref{thm:adversary}. Let $X$ be the set of all bit-strings with exactly one block containing only 0's, and all other blocks containing $\gamma m^d$ 1's; and let $Y$ be the set of all bit-strings with all blocks containing $\gamma m^d$ 1's. Finally let $R$ be the set of all pairs $(x,y) \in X \times Y$ such that $x$ and $y$ only differ within exactly one block. Then one can readily calculate, in the notation of Theorem \ref{thm:adversary}, that $\mu = \binom{m^d}{\gamma m^d}$, $\mu' = p^d$, $\lambda = \binom{m^d-1}{\gamma m^d-1}$, $\lambda' = 1$. We therefore obtain a lower bound of $\Omega(\sqrt{\mu\mu'/(\lambda\lambda')}) = \Omega(\sqrt{p^d/\gamma}) = \Omega((n/m)^{d/2}/\sqrt{\gamma})$ queries.

For the classical lower bound, we use the Yao principle that it suffices to prove a lower bound on deterministic algorithms which succeed on most inputs picked from some probability distribution (here, the distribution described above). Imagine that the output should be 0. The algorithm cannot be confident that this is the case until it has seen a 1 in each of the $p^d$ blocks. But, within each block, the number of queries required to do so is $\Omega(1/\gamma)$. Multiplying these bounds gives the claimed result.
\end{proof}

If we have no a priori limitation on $\gamma$, it can be as low as $1/m^d$, so in this case we have a bound of $\Omega(\sqrt{n})$ for $d=1$. This is achieved, up to polylogarithmic factors, by the algorithm of Ramesh and Vinay~\cite{ramesh03}.

We now give a general classical lower bound for the pattern detection problem for the case of injective functions.

\begin{lem}
\label{lem:injclasslb}
Any bounded-error classical algorithm which solves the pattern detection problem for all injective functions $P:[m]^d \rightarrow \Sigma$ and $T:[n]^d \rightarrow \Sigma$ must make at least $\Omega(n^{d/2} + (n/m)^d + 1/\gamma)$ queries in total.
\end{lem}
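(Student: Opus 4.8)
The plan is to establish each of the three terms $\Omega(n^{d/2})$, $\Omega((n/m)^d)$, and $\Omega(1/\gamma)$ separately via an adversary/distributional argument, since the total complexity is at least the maximum of the three, and the maximum is $\Theta$ of the sum up to constants. The $\Omega(1/\gamma)$ term is essentially immediate: by the Yao principle, take a fixed injective $P$ and a text $T$ that agrees with $P$ at some candidate offset on all but a $\gamma$ fraction of positions (and elsewhere is made injective and far from matching); a deterministic algorithm cannot decide whether this is the ``matching'' case until it has queried one of the roughly $\gamma m^d$ disagreeing positions, which takes $\Omega(1/\gamma)$ queries in expectation over a uniformly random placement of the disagreements. (This is the same kind of within-block argument already used in the proof of Lemma \ref{lem:injlb1}.)

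For the $\Omega((n/m)^d)$ term, I would reuse essentially the construction from Lemma \ref{lem:injlb1}, but modified so that $T$ is genuinely injective: tile $T$ with $p^d = (n/m)^d$ blocks, each a shifted/tagged copy of $P$ so that all $n^d$ text symbols are distinct, and in exactly one block (or no block) plant an exact match. Deciding whether some block matches requires, by Yao, inspecting at least one ``tag'' symbol in each of the $\Omega((n/m)^d)$ blocks, since a block's tag is the only thing distinguishing a matching from a non-matching block and these tags are independent across blocks. This gives $\Omega((n/m)^d)$ queries to $T$.

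The main obstacle is the $\Omega(n^{d/2})$ term, which must hold even when $m$ is close to $n$ (so that $(n/m)^d$ is only a constant); here the hardness has to come from the fact that, even to locate a single match within a window of size comparable to $m$, a classical algorithm must essentially search. The plan is: fix an injective pattern $P:[m]^d\to\Sigma$, and let $T$ be an injective text of side length $n = 2m$ (say), built by embedding $P$ at one of $\Theta(m^d) = \Theta(n^d)$ possible offsets and filling the remaining $T$-symbols with fresh distinct alphabet elements so that $T$ is injective and $P$ matches nowhere else. Wait --- this only forces $\Omega(n^d)$ in the worst case but we want a matching upper bound of order $n^{d/2}$, so the intended statement must be that $n^{d/2}$ is a lower bound, not that the problem is harder; and indeed a single exact-match query at an offset reveals a lot. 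The right move is instead a hybrid/birthday-style argument: consider the ``no match anywhere'' instance versus the family of ``match at offset $s$'' instances over random $s$; any deterministic algorithm making $T$ queries that has not queried a position inside the planted copy cannot distinguish, and since the planted copy occupies an $m^d$-fraction of an $n^d$-size text while the algorithm must also verify non-matching over the $\Theta(n^d)$ candidate offsets, a counting argument (each query rules out $O(1)$ candidate offsets as the match location, but verifying ``no match'' requires ruling out essentially all of them, while verifying ``match at $s$'' requires hitting one of $m^d$ positions) yields the geometric-mean bound $\Omega(\sqrt{n^d \cdot 1}) = \Omega(n^{d/2})$ in the regime $m = \Theta(n)$, and more generally $\Omega(n^{d/2})$ from balancing the search-for-the-match cost against the verify-non-matches cost. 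I would formalize this exactly as in the quantum case: set up $X$ = matching instances (one planted copy at one of $\Theta(n^d)$ offsets, rest filled injectively), $Y$ = the non-matching instance(s), invoke Yao, and argue that a random matching instance requires $\Omega(n^{d/2})$ queries in expectation because the algorithm's queries, to be useful, must either hit the (unknown, roughly $n^d$-many-candidate) planted location or accumulate enough evidence of non-matching, and balancing these two requirements gives $n^{d/2}$. The delicate point is ensuring the filler symbols can be chosen to keep $T$ injective while also guaranteeing the far-from-matching promise holds at every non-planted offset with parameter $\gamma$; since $|\Sigma|$ is unrestricted this is arrangeable, e.g.\ by using $\Sigma = [n]^d$ with coordinate tags, exactly as in Lemma \ref{lem:injlb1}. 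Combining the three bounds and taking the maximum completes the proof.
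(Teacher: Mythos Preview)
Your arguments for the $\Omega((n/m)^d)$ and $\Omega(1/\gamma)$ terms are fine and essentially match the paper's. The gap is in the $\Omega(n^{d/2})$ term.

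Your construction fixes an injective $P$ (so the algorithm \emph{knows} $P$), plants it at a random offset $s$ in $T$, and fills the rest of $T$ with fresh symbols. But in the regime $n=2m$ you yourself consider, the planted copy occupies a constant fraction of $T$; a single query to $T$ lands inside the planted copy with constant probability, and because $P$ is known and injective, that one value immediately reveals $s$. So this family of instances is solvable with $O(1)$ queries, not $\Omega(n^{d/2})$. More generally, once $P$ is known, querying one text position per $m\times\cdots\times m$ block already solves the problem in $O((n/m)^d)$ queries (this is exactly the upper bound in Theorem~\ref{thm:classub}), so no construction with a fixed known $P$ can yield more than $\Omega((n/m)^d)$. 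Your ``balancing search against verification'' heuristic does not produce $n^{d/2}$ here: each query rules out not $O(1)$ but $\Theta(m^d)$ candidate offsets, and there is nothing to balance against.

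The missing idea is that the $n^{d/2}$ bound must count queries to \emph{both} $P$ and $T$, with neither known in advance. The paper takes $\Sigma=[n^d]$ and two distributions: in $\mathcal{D}_0$, $T$ is a random permutation of $[n^d]$ and $P$ is $m^d$ random values from $[n^d]$ randomly arranged; in $\mathcal{D}_1$, $T$ is the same and $P$ is a random $m\times\cdots\times m$ substring of $T$. Until the algorithm sees a \emph{collision} --- some queried $T$-value equal to some queried $P$-value --- the two distributions are indistinguishable. With $K$ queries to $T$ and $L$ to $P$, a union bound gives collision probability $O(KL/n^d)$, so $KL=\Omega(n^d)$ and hence $K+L=\Omega(n^{d/2})$. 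This is a genuine birthday argument between the $P$-queries and the $T$-queries, not a search over offsets, and it is where the square root actually comes from.
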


Part of the bound achieved by Lemma \ref{lem:injclasslb} is similar to a result of Batu et al.~\cite{batu03}, who proved an $\Omega(\sqrt{n})$ lower bound for the hidden shift problem where $f,g:\Z_n \rightarrow \Z$ and we are promised that $g(x) = f(x+s)$ for some $s \in \Z_n$.

\begin{proof}
We first observe that lower bounds of $\Omega((n/m)^d)$ and $\Omega(1/\gamma)$ are easy. For the former, we divide the text into  $(n/m)^d$ contiguous $m \times \dots \times m$ blocks, and impose the promise that the pattern matches the text within one of the blocks. Then the number of queries used by any classical algorithm which identifies which block this is must be lower bounded by a quantity proportional to the number of blocks. For the latter, if we promise that the pattern either matches the text at some known offset, or has a $\gamma$ fraction of its entries not matching, to determine which is the case requires $\Omega(1/\gamma)$ queries to the text.

So it remains to prove an $\Omega(n^{d/2})$ lower bound. For the proof we use the alphabet $\Sigma = [n^d]$. Consider two distributions $\mathcal{D}_0$, $\mathcal{D}_1$. In the first distribution, $T$ is a uniformly random permutation of $[n^d]$, and $P$ is formed by choosing $m$ integers from $[n^d]$ at random and then randomly permuting them. In the second distribution, $T$ is formed in the same way, and $P$ is formed by taking a random sub-block within $T$ of size $m \times \dots \times m$. We show that any deterministic classical algorithm making $o(n^{d/2})$ queries cannot distinguish between $\mathcal{D}_0$ and $\mathcal{D}_1$. By the Yao principle, this suffices to prove the corresponding bound for randomised algorithms.

Imagine the algorithm makes $K$ queries to $T$ and $L$ queries to $P$. Define a collision to be the result of some query to $T$ which equals the result of some previous query to $P$, or vice versa. Then if the classical algorithm has not seen any collisions when it terminates, the distributions $\mathcal{D}_0$ and $\mathcal{D}_1$, conditioned on the query results, are indistinguishable. It therefore suffices to upper-bound the probability that the algorithm finds a collision. Each result of a query to $P$ or $T$ that is not a collision gives no additional information about $P$ or $T$. Therefore, until at least one collision is found, we can assume that the choice of subsequent queries does not depend on previous queries. In particular, we can assume that all the queries to $P$ are made first. Then, using a union bound, the probability that one of the queries to $T$ is a collision is at most
\[ \frac{L}{n^d} + \frac{L}{n^d-1} + \dots + \frac{L}{n^d-K+1}. \]
Assuming that $K \le n^d/2$, this is at most $2KL/n^d$. Therefore, to see a collision with probability at least $1/2$, we need $KL = \Omega(n^d)$, implying $K + L = \Omega(n^{d/2})$.
\end{proof}

For completeness, we describe a classical algorithm which matches the bound of Lemma \ref{lem:injclasslb}.

\begin{thm}
\label{thm:classub}
Let $P:[m]^d\rightarrow \Sigma$ and $T:[n]^d \rightarrow \Sigma$ be injective. Then there is a bounded-error classical algorithm which finds $P$ within $T$, if it exists, using $O(n^{d/2} + (n/m)^d + 1/\gamma)$ queries.
\end{thm}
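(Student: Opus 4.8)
The plan is to combine a birthday (meet-in-the-middle) search with a coarse grid scan, using each in the regime of $m$ where it is cheap. The structural fact I exploit is injectivity of both $P$ and $T$: a value occurs at most once in each string, so a single coincidence $P(x)=T(y)$ pins down the candidate offset $y-x$ (note this already makes the match unique when it exists, since $T(s+x)=T(s'+x)$ for all $x$ forces $s=s'$), and, crucially, a long list of candidate offsets can be pruned to at most one at the cost of a single extra query per candidate. I assume the same $\gamma$-promise as in the pattern detection problem, so that a genuinely non-matching offset $s'$ can be rejected by querying $P(x)$ and $T(s'+x)$ at $O(1/\gamma)$ random $x\in[m]^d$ and rejecting unless all agree. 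The two regimes are separated by comparing the leading terms: if $m\ge\sqrt n$ then $n^{d/2}\ge(n/m)^d$ and I use the birthday search; if $m<\sqrt n$ then $(n/m)^d\ge n^{d/2}\ge m^d$ and I use the grid scan.

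Birthday search ($m\ge\sqrt n$). Query $P$ at $K=\Theta(n^{d/2})$ uniformly random distinct positions (possible as $m^d\ge n^{d/2}$, or query all of $P$ if $m^d$ is smaller) and $T$ at $L=\Theta(n^{d/2})$ uniformly random distinct positions, and collect all colliding pairs $P(x_i)=T(y_j)$. If $P$ matches $T$ at the offset $s$, then the $m^d$ values of $P$ occupy exactly the cube $s+[m]^d$ inside $T$, so $P(x_i)=T(y_j)$ holds precisely when $y_j=s+x_i$; a birthday computation (the $K$ positions $s+x_i$ are ``special'' among the $n^d$ positions of $T$, and we sample $L$ of them) shows that with probability $\ge 2/3$ some such pair occurs, whereupon $s=y_j-x_i$ is recovered. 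In the no-match case the expected number of colliding pairs is still $O(1)$, since each of the $KL=O(n^d)$ pairs collides with probability $O(1/n^d)$, so with high probability the candidate list has constant length. For each distinct candidate $s'$ I make one query $T(s')$ and discard $s'$ unless $T(s')=P(0)$; injectivity of $T$ leaves at most one survivor, which I then subject to the $O(1/\gamma)$ sampling test and report accordingly (offset, or ``not found'').

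Grid scan ($m<\sqrt n$). Query all $m^d$ entries of $P$ (affordable since $m^d\le(n/m)^d$) and all $O((n/m)^d)$ ``grid points'' of $T$, i.e.\ positions $m\cdot a$ for $a\in\{0,\dots,\lfloor(n-1)/m\rfloor\}^d$. The point is that every cube $s+[m]^d\subseteq[n]^d$ contains exactly one grid point, because any $m$ consecutive integers contain exactly one multiple of $m$ (and that multiple lies in $\{0,\dots,n-1\}$, so it is one of the queried points). Hence if $P$ matches $T$ at $s$, the grid point $g$ in $s+[m]^d$ satisfies $T(g)=P(g-s)$, which I detect by looking $T(g)$ up in the now fully known value table of $P$, recovering the candidate $g-x$ where $P(x)=T(g)$. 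Moreover, when a match exists, injectivity of $T$ forces the values of $P$ to occupy exactly $s+[m]^d$ in $T$, so exactly one grid point hits a value of $P$ and it yields the correct offset. In the no-match case each grid point that happens to hit a value of $P$ yields a candidate; this list has length $O((n/m)^d)$, and I prune it exactly as above (one query $T(s')$, keep only if it equals $P(0)$, injectivity of $T$ leaves at most one), then run the $O(1/\gamma)$ test on the survivor.

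Putting this together: the first regime costs $O(n^{d/2})$ for sampling, $O(1)$ (w.h.p.) for pruning, and $O(1/\gamma)$ for the final test; the second costs $O(m^d+(n/m)^d)=O((n/m)^d)$ for the scan, $O((n/m)^d)$ for pruning, and $O(1/\gamma)$ for the test; both total $O(n^{d/2}+(n/m)^d+1/\gamma)$, with failure probability a small constant that is amplified by repetition. I expect the main obstacle to be the no-match analysis: one must argue that the candidate list cannot blow up (the $O(KL/n^d)$ collision bound in the birthday case, the $O((n/m)^d)$ grid-point bound in the scan) and that injectivity of $T$ really does prune it to a single survivor, and one must invoke the $\gamma$-promise exactly once, on that survivor — since without it a single non-matching offset could not be refuted in $o(m^d)$ queries.
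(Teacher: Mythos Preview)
Your approach is correct, and for the regime $m<\sqrt{n}$ it coincides with the paper's: read all of $P$ and one grid point of $T$ per $m\times\cdots\times m$ block, then verify the surviving candidate with $O(1/\gamma)$ samples. The difference is in the regime $m\ge\sqrt{n}$. You use a randomized birthday search (sample $\Theta(n^{d/2})$ positions from each of $P$ and $T$ and look for a value coincidence), whereas the paper runs the \emph{same} grid scan but with block size $k=\sqrt{n}$: it reads the $k\times\cdots\times k$ corner $P_{0,k}$ of the pattern and one text entry per $k\times\cdots\times k$ block, for a total of $k^d+(n/k)^d=O(n^{d/2})$ queries. This single parameterized algorithm with $k=\min\{\sqrt{n},m\}$ covers both regimes at once and is deterministic in its search phase, so no birthday calculation is needed; in the match case the values $\{P(x):x\in[k]^d\}$ occupy exactly the cube $s+[k]^d$ inside $T$ (by injectivity of $T$), and that cube contains exactly one grid point.

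What your write-up does better than the paper's sketch is the treatment of the no-match case. The paper simply says ``exactly one of these characters will match'' (implicitly assuming a match exists) and then passes to the $\gamma$ test; your explicit pruning step --- query $T(s')$ for each candidate $s'$ and keep only the one with $T(s')=P(0)$, using injectivity of $T$ to guarantee at most one survivor --- makes it transparent that a single $O(1/\gamma)$ test suffices regardless of how many value coincidences were found. Incidentally, you do not actually need the ``expected $O(1)$ collisions'' bound in the birthday regime: by injectivity of $T$ each sampled $P$-value matches at most one sampled $T$-value, so there are deterministically at most $K=O(n^{d/2})$ candidates, and the pruning pass stays within budget.
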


\begin{proof}
The algorithm proceeds as follows. Read in the $k^d$ elements of the $k \times \dots \times k$ substring $P_{0,k}$, for some $k$ to be determined. Then divide the text into $O((n/k)^d)$ contiguous blocks of size at most $k \times \dots \times k$ and read the entry of the text at position $(0,\dots,0)$ within each block. If $P$ is contained within $T$, exactly one of these characters will match one of the characters previously read from $P$. Once a matching character has been found, the algorithm samples $O(1/\gamma)$ elements from the pattern and the text in the neighbourhood of that character to be convinced that the pattern does indeed match at that offset. To minimise the overall bound, set $k = \min\{ \sqrt{n}, m \}$.
\end{proof}


\subsection{A classical lower bound for random strings}

We finally complete the proof of Theorem \ref{thm:random} by giving the promised lower bound on the classical query complexity of pattern matching in random strings.

\begin{prop}
\label{prop:classrandomlb}
Fix $d=O(1)$. Any bounded-error classical algorithm which solves the pattern detection problem for random functions $P:[m]^d \rightarrow \Sigma$ and $T:[n]^d \rightarrow \Sigma$ must make at least $\Omega((n / m)^d\log_q m + n^{d/2} / \sqrt{\log_q n})$ queries in total, where $q = |\Sigma|$.
\end{prop}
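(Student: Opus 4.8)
The plan is to combine the two classical lower bounds already developed in this section, namely the $\Omega((n/m)^d \log_q m)$ bound of Yao (and its $d$-dimensional extension due to Kärkkäinen and Ukkonen, cited above) and a collision-based argument in the style of Lemma~\ref{lem:injclasslb}, and to argue that a random text and pattern simultaneously force both bounds. Since a classical lower bound on a distribution implies one against randomised algorithms via the Yao principle, it suffices to exhibit a distribution on inputs — here the natural one, $T$ uniform and $P$ either uniform or a random substring of $T$ — on which any deterministic algorithm making too few queries fails. The final bound is then obtained as the maximum of the two individual bounds, which is $\Omega(\max\{(n/m)^d\log_q m,\ n^{d/2}/\sqrt{\log_q n}\}) = \Omega((n/m)^d\log_q m + n^{d/2}/\sqrt{\log_q n})$.

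First I would handle the $\Omega((n/m)^d \log_q m)$ term. This is essentially Yao's argument: divide $T$ into $(n/m)^d$ disjoint blocks; in the ``match'' case $P$ equals one randomly chosen block, and in the ``no match'' case $P$ is independently uniform. To rule out a block as the location of the match, an algorithm must read enough characters from that block to be confident the block differs from $P$; since characters of a uniform string over $\Sigma$ agree with the corresponding pattern character with probability $1/q$, distinguishing ``this block equals $P$'' from ``this block is random'' with constant confidence requires reading $\Omega(\log_q m)$ characters per block (an entropy/likelihood-ratio computation: after reading $r$ matching characters the posterior odds shift by a factor $q^r$, and one needs $r = \Omega(\log_q m)$ to overcome the prior favouring ``no match''). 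Multiplying by the $(n/m)^d$ blocks — all of which must be examined before the algorithm can output ``not found'' or commit to a block — yields $\Omega((n/m)^d\log_q m)$. Care is needed to phrase this so that the per-block cost and the number-of-blocks cost genuinely multiply; the cleanest route is to condition on the match (if any) being in the last block examined, so that the algorithm is forced to pay the per-block cost $(n/m)^d$ times.

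Next I would prove the $\Omega(n^{d/2}/\sqrt{\log_q n})$ term by adapting the collision argument in the proof of Lemma~\ref{lem:injclasslb}. The subtlety is that a uniformly random $T:[n]^d\to\Sigma$ with $|\Sigma| = q$ is \emph{not} injective when $q$ is small, so we cannot directly treat query results as a random permutation; but Lemma~\ref{lem:randominj} tells us that with probability $\ge 1 - 1/n^d$ the injectivity length $\upsilon(T)$ is at most $(3d\log_q n)^{1/d}$, so that length-$O((\log_q n)^{1/d})$ windows of $T$ already take distinct values — and there are only $n^d$ such windows out of $q^{\,\Theta(\log_q n)^{1/d}}$-ish possibilities, so on the high-probability event the windowed string behaves enough like a random injection. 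Reformulating: a query-character of $T$ conveys essentially no information unless it participates in a \emph{collision} with a previously-seen character of $P$ (which, together with $O(\log_q n)$ neighbouring characters, pins down an alignment); before any collision occurs the two distributions $\mathcal{D}_0$ (random $P$) and $\mathcal{D}_1$ ($P$ a random substring of $T$) are statistically close. Each query pair collides with probability $O(1/q)$ per overlapping character, so with $K$ text queries and $L$ pattern queries the expected number of colliding character-alignments is $O(KL/q \cdot \text{(window length)})$; forcing this to be $\Omega(1)$ gives $KL = \Omega(q/\log_q n) = \Omega(n/\log_q n)$ — wait, I want $n^d$ here, so the right accounting is over \emph{alignments of windows}, of which there are $\Theta(n^d)$, each detectable only after $\Theta(\log_q n)$ matching character-queries; balancing as in Lemma~\ref{lem:injclasslb} (the union bound $\sum_i L/(n^d - i)$ becomes $\le 2KL/n^d$ once collisions are measured in units of fully-resolved windows) yields $KL/\log_q n = \Omega(n^d)$, hence $K + L = \Omega(n^{d/2}/\sqrt{\log_q n})$.

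The main obstacle I anticipate is making the second argument rigorous: unlike the clean injective case of Lemma~\ref{lem:injclasslb}, here a single character collision is not decisive — the alphabet is small, so spurious character matches between $P$ and $T$ are common — and one must carefully define a ``collision'' at the level of $\Omega(\log_q n)$-length windows and show (i) that before such a windowed collision the posterior on $\mathcal{D}_0$ versus $\mathcal{D}_1$ is nearly unchanged, and (ii) that the probability of producing one in $K + L$ queries is $O((K+L)^2 \log_q n / n^d)$. Controlling the dependence between successive queries (the adaptivity) is the delicate point, handled as in Lemma~\ref{lem:injclasslb} by observing that, conditioned on no windowed collision so far and on the high-probability event from Lemma~\ref{lem:randominj}, non-colliding query outcomes are (close to) independent of the algorithm's future choices, so one may assume all pattern queries precede all text queries and apply a union bound. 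The two bounds are then combined and the proposition follows.
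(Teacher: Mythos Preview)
Your treatment of the first term matches the paper's: both simply cite Yao's bound and its $d$-dimensional extension.

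For the second term, your approach diverges from the paper's, and your sketch contains a real gap. The paper does \emph{not} try to redo the collision argument for non-injective strings. Instead it reduces cleanly to Lemma~\ref{lem:injclasslb}: restrict the hard distribution so that the match (if any) occurs at an offset that is a multiple of $b=(3d\log_q n)^{1/d}$ in every coordinate (a promise can only make the problem easier), then group characters into $b\times\cdots\times b$ blocks and treat each block as a single ``megacharacter'' over the alphabet $\Sigma' = \Sigma^{b^d}$ of size $n^{3d}$. With high probability all megacharacters in $T$ and $P$ are distinct, so Lemma~\ref{lem:injclasslb} applies verbatim to the megacharacter problem of side length $n/b$, yielding $\Omega((n/b)^{d/2}) = \Omega(n^{d/2}/\sqrt{\log_q n})$ megacharacter queries; since one original-character query reveals at most one character of one megacharacter, the same bound holds for original queries. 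This buys you a black-box invocation of the injective lemma with no new collision analysis.

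Your direct route, by contrast, leaves the central step unjustified. You never say precisely what a ``windowed collision'' is as an event in the query transcript, nor how to bound its probability in terms of the numbers $K,L$ of \emph{character} queries; the transition from ``each detectable only after $\Theta(\log_q n)$ matching character-queries'' to ``$KL/\log_q n = \Omega(n^d)$'' is a leap. And the arithmetic at the end is inconsistent: $KL/\log_q n = \Omega(n^d)$ means $KL = \Omega(n^d\log_q n)$, which by AM--GM gives $K+L = \Omega(n^{d/2}\sqrt{\log_q n})$, \emph{stronger} than the bound you state (and than what the paper proves). To get $K+L = \Omega(n^{d/2}/\sqrt{\log_q n})$ you would need $KL = \Omega(n^d/\log_q n)$, which is the opposite inequality. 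This signals that the accounting is off somewhere --- most likely in conflating character-level and window-level queries. The paper's megacharacter reduction sidesteps all of this; I recommend adopting it.
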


\begin{proof}
An $\Omega((n \log_q m) / m)$ lower bound on the complexity of this problem was shown by Yao~\cite{yao79b} for the case $d=1$, which can be generalised to arbitrary $d$~\cite{karkkainen94}. To prove the second part of the bound, we use Lemma \ref{lem:injclasslb}. Assume for simplicity that $n$ and $m$ are each integer multiples of $b := (3 d \log_q n)^{1/d}$. Take an instance of the pattern detection problem and impose the constraint that the pattern is promised to match at an offset which is an integer multiple of $b$ in each dimension (this can only make the problem easier). Divide both the pattern and the text into blocks of $b \times \dots \times b$ characters, and concatenate the characters within each block to make a ``megacharacter'' of a larger alphabet $\Sigma'$ with $|\Sigma'| = q^{b^d} = n^{3d}$. If $P$ and $T$ are uniformly random, then each of the megacharacters will also be picked uniformly at random from $\Sigma'$. On the other hand, if $P$ matches $T$ at some offset, then each of the megacharacters will be random, except for those where $P$ matches $T$. Aside from these positions, all of the other characters in $T$ and $P$ will be unique except with probability at most $\binom{(n/b)^d+(m/b)^d}{2} n^{-3d} = O((nb^2)^{-d}) = O(n^{-d})$. Therefore, the $\Omega(n^{d/2})$ bound from Lemma \ref{lem:injclasslb} can be applied (replacing $n$ with $n / (3 d \log_q n)^{1/d}$), making minor modifications to account for the fact that the alphabet here is slightly bigger, which can only make the problem harder.
\end{proof}

This bound is tight up to poly-logarithmic factors, as by Lemma \ref{lem:randominj} a random pattern and text can be made injective at the cost of at most an $O(\log n)$ multiplicative factor, after which Theorem \ref{thm:classub} can be applied.


\section{Quantum algorithm for shift finding in $d$ dimensions}
\label{sec:dihedral}

In this section we describe a quantum algorithm for identifying hidden shifts in $d$ dimensions, eventually proving the following theorem:

\begin{repthm}{thm:dmatchapprox}
Let $d=O(1)$ and let $X$ be an arbitrary finite set. Let $f:\Z_{2^n}^d \rightarrow X$ and $g:\Z_{2^n}^d \rightarrow X$ be injective functions such that $\Pr_x[g(x) \neq f(x+s)] = O(n^{-2} 2^{-\sqrt{(2\log_2 3)dn}})$ for some $s \in \Z_{2^n}^d$. Then there is a quantum algorithm which outputs $s$ with bounded error using $O(n 2^{\sqrt{(2\log_2 3)dn}}) = O(n 2^{1.781\dots\sqrt{dn}})$ queries to each of these functions. The runtime is the same up to a $\poly(n)$ factor.
\end{repthm}

We first consider the case where $g$ exactly matches $f$ at some shift $s$. The algorithm to determine $s$ can be seen as a hybrid of two algorithms of Kuperberg~\cite{kuperberg05}, so we begin by discussing the intuition behind Kuperberg's original algorithm for the case $d=1$ (see~\cite{kuperberg05,regev04} for more). It is based on producing a number of states
\[ \ket{\psi_r} := \frac{1}{\sqrt{2}} \left( \ket{0} + \omega^{rs} \ket{1} \right) \]
for uniformly random $r \in \Z_{2^n}$, where we define $\omega := e^{\pi i/2^{n-1}}$. We describe later on how this can be done. The algorithm uses a combination operation which takes as input two states $\ket{\psi_r}$, $\ket{\psi_t}$. This operation returns $\ket{\psi_{r-t}}$ with probability $1/2$ (we call this ``success''), and $\ket{\psi_{r+t}}$ with probability $1/2$ (we call this ``failure''). The intention is to produce the state $\ket{\psi_{2^{n-1}}} = \frac{1}{\sqrt{2}} \left( \ket{0} + (-1)^{s_n} \ket{1} \right)$, where $s_n$ is the lowest-order bit of $s$. Given this state, the bit $s_n$ can be determined by applying a Hadamard gate and measuring; using this as a subroutine turns out to be sufficient to identify the hidden shift $s$ in its entirety.

The idea of Kuperberg~\cite{kuperberg05} which will enable us to produce such a state $\ket{\psi_{2^{n-1}}}$ quite efficiently is as follows. Divide the $n$ bits into $O(\sqrt{n})$ blocks of consecutive bits, starting with the lowest-order bits. Each block is of length $O(\sqrt{n})$, aside from the last block, which only contains the highest-order bit. The algorithm is split into a number of stages. The $i$'th stage of the algorithm is given as input a pool of states $\ket{\psi_r}$ such that all the bits of $r$ in the first $i$ blocks are zero. Then each state $\ket{\psi_r}$ is paired up with another state $\ket{\psi_{r'}}$ such that $r = r'$ on the bits in block $i+1$, if such a state exists. The combination operation is applied to these pairs to produce some new states $\ket{\psi_{r''}}$ such that all the bits in the $(i+1)$'st block of $r''$ are also 0. We repeat this process until we have zeroed all the bits, except the highest-order bit.

The probability that two random bit-strings agree in their lowest $O(\sqrt{n})$ bits is $2^{-O(\sqrt{n})}$. Thus it requires about $2^{O(\sqrt{n})}$ states to obtain ``many'' pairs whose lowest $O(\sqrt{n})$ bits are equal. Roughly a $1/2$ fraction of these pairs will successfully combine to give roughly a $1/4$ fraction of bit-strings which have their lowest-order bits zero, and can be input to the next stage. After applying the combination operation, the higher-order bits are still uniformly distributed, so we can repeat this argument. The net result is that we need to start with $2^{O(\sqrt{n})}$ states in total to have a good chance of eventually producing many states $\ket{\psi_{2^{n-1}}}$.

A similar idea can be used for the case $d>1$. Here the hidden shift $s \in \Z_{2^n}^d$ is thought of as a $d$-tuple $(s^{(1)},\dots,s^{(d)})$ of $n$-bit strings, and we redefine
\[ \ket{\psi_r} := \frac{1}{\sqrt{2}} \left( \ket{0} + \omega^{r\cdot s} \ket{1} \right), \]
where $r = (r^{(1)},\dots,r^{(d)})$ is a $d$-tuple of $n$-bit strings, and $r\cdot s = \sum_{i=1}^d r^{(i)} s^{(i)}$. We seek to learn the lowest-order bit $s^{(i)}_n$ of each bit-string. We achieve this by producing states $\ket{\psi_r}$ such that every bit of $r^{(i)}$ is zero for all $i$, except the highest-order bits $r^{(i)}_1$, which are uniformly random. Thus $\ket{\psi_r} = \frac{1}{\sqrt{2}} \left( \ket{0} + (-1)^{\sum_i r^{(i)}_1 s^{(i)}_n} \ket{1} \right)$. If we measure in the Hadamard basis, we learn the sum modulo 2 of a random subset of the $s^{(i)}_n$ values; repeating this $O(d)$ times is sufficient to learn all the $d$ bits $s^{(1)}_n,\dots,s^{(d)}_n$. In order to produce states $\ket{\psi_r}$ of this form, the same process as sketched above can be used to zero corresponding blocks of bits in every element of the $d$-tuple at once. It turns out to be more efficient to reduce the block size to $O(\sqrt{n/d})$; with this block size we end up with needing an initial pool of $2^{O(\sqrt{dn})}$ states.

The algorithm we give here achieves an improved complexity over Kuperberg's algorithm by noticing that the states $\ket{\psi_{r+t}}$ resulting from a failed combination operation between $\ket{\psi_r}$, $\ket{\psi_t}$ can be reused. If the pairs $(r,t)$ and $(r',t')$ all had the same low-order bits, so do the pair $(r+t,r'+t')$; and the high-order bits of $r+t$ and $r'+t'$ are still uniformly distributed. Assuming that we have $N \gg 2^{O(\sqrt{n})}$ states input to a given stage, we  expect roughly $N/4 + N/16 + \dots = N/3$ states to be put through to the next stage, an improvement over the previous $N/4$. An additional improvement is found by modifying the block size depending on the algorithm's progress. In early stages, the algorithm has access to a very large pool of states, so should try to zero many bits at once. Later on, there are fewer states available, so fewer bits are zeroed.

{\bf Other algorithms. }Kuperberg describes a second algorithm which is based on greedily choosing states $\ket{\psi_r}$, $\ket{\psi_t}$ to combine, in order to maximise the number of zero bits obtained~\cite{kuperberg05}. The running time of his algorithm is essentially the same as the algorithm described here. However, the analysis of the algorithm given here seems (to the author) somewhat easier; and in particular, here we give a full proof that the algorithm succeeds with high probability, which is omitted in~\cite{kuperberg05}. Kuperberg also outlines a general algorithm which works for any abelian group, rather than just the case $\Z_{2^n}^d$ we consider here. However, the complexity of this algorithm is not calculated precisely, only being given in the form $2^{O(\sqrt{n})}$. A subsequent algorithm of Regev~\cite{regev04} achieves improved space complexity over the algorithms of~\cite{kuperberg05}, but at the expense of increased time complexity. Another interesting algorithm for this problem, which achieves improved quantum (but not classical) space complexity, was recently given by Kuperberg~\cite{kuperberg13}. This algorithm is believed to have a somewhat faster runtime compared with the algorithm given here, but only a heuristic argument for this is currently known~\cite{kuperberg13}.


\subsection{The algorithm}

We now describe a quantum algorithm for solving the $d$-dimensional hidden shift problem. The algorithm is defined in terms of integers $N$ and $S$, and a list of integers $b_1,\dots,b_S$ such that $\sum_{i=1}^S b_i = n-1$. $N$ is the number of states used, $S$ is the number of stages of the algorithm, and $b_i$ is the number of bits which are zeroed during stage $i$. At each stage $i$ the algorithm operates on a list $\mathcal{L}_i$ of states, which are partitioned into bins.

The algorithm proceeds as follows:

\begin{enumerate}
\item Create a list $\mathcal{L}_1$ of $N$ states $\ket{\psi_r}$, for random $r \in \Z_{2^n}^d$.
\item Repeat the following operations for $i=1,\dots,S$:
\begin{enumerate}
\item Sort each state $\ket{\psi_r} \in \mathcal{L}_i$ into one of $2^{db_i}$ bins according to the values of the bits of $r$ at indices $n + 1 - \sum_{j=1}^i b_j,\dots,n - \sum_{j=1}^{i-1} b_j$.
\item Repeatedly perform the following steps, until the total number of states in all the bins is at most $n^2$:
\begin{enumerate}
\item Divide the states in each bin into pairs, discarding any left-over states.
\item Apply the combination operation to each pair.
\item For each successful operation, add the resulting state to $\mathcal{L}_{i+1}$. For each failure, leave the resulting state in the same bin as before.
\end{enumerate}
\end{enumerate}
\item The result is a list $\mathcal{L}_{S+1}$ of states of the form $\ket{\psi_r}$, where $r$ is uniformly distributed in $\{0,2^{n-1}\}^d$.
\end{enumerate}

Observe that the time complexity of the algorithm is $O(SN \log N)$, assuming that each element of the initial list can be created in time $O(1)$ and the combination operation takes time $O(1)$. This is because the time complexity of each stage is dominated by the sorting operation in step 2a, which can be carried out in time $O(N \log N)$. We have the following claim:

\begin{claim}
\label{claim:alg}
For arbitrary $k = O(1)$, there exist $N = O(n 2^{\sqrt{(2\log_2 3)dn}})$, $S = O(\sqrt{n})$ and a choice of integers $b_i$ such that the final list $\mathcal{L}_{S+1}$ satisfies $|\mathcal{L}_{S+1}| \ge k$ with probability at least $2/3$.
\end{claim}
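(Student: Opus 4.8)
The plan is to analyze the evolution of the list sizes $|\mathcal{L}_i|$ across stages, tracking both the number of states fed forward and the distribution of their high-order bits. First I would fix the block sizes $b_i$ by a ``budgeting'' argument: since stage $i$ consumes a pool whose size shrinks by a factor close to $1/3$ each stage (as sketched in the text, accounting for the reuse of failures yielding $N/4 + N/16 + \cdots = N/3$), and since building ``many'' pairs in $2^{db_i}$ bins from a pool of size $N_i$ requires $N_i \gtrsim 2^{db_i}$, the natural choice is to take $b_i$ roughly constant at $\Theta(\sqrt{n/d})$ in early stages and let it taper off; one then checks $\sum_i b_i = n-1$ with $S = O(\sqrt{n})$. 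Setting $\log_2 N = \Theta(\sqrt{dn}) + \Theta(\log n)$ with the right constant (matching $\sqrt{(2\log_2 3)dn}$, the $\log_2 3$ coming from the $1/3$ survival rate per stage over $S$ stages: $(1/3)^S$ loss demands $N \gtrsim 3^S$, and $S \approx \sqrt{2n/\log_2 3}\cdot\sqrt{\log_2 3}$... I would reverse-engineer the constant so that $3^{S} \cdot 2^{O(\sqrt{n})}$ balances against $2^{db_i} \le N$) gives the claimed bound.

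The core of the argument is an inductive invariant: at the start of stage $i$, conditioned on the history, $\mathcal{L}_i$ contains at least $N_i := N/(c^{i-1})$ states for an appropriate constant $c < 3$, and moreover the bits of each $r$ in blocks $i, i+1, \dots, S$ (i.e.\ the not-yet-zeroed high-order positions) are jointly uniform and independent across states. The second part is the crucial structural point: the combination operation on a matched pair $(r, t)$ with equal low blocks produces $r \mp t$; if the inputs had uniform independent high bits, so do the outputs, and this holds whether we keep the success $r-t$ or recycle the failure $r+t$. Given uniformity, the number of states landing in a given one of the $2^{db_i}$ bins is $\mathrm{Binomial}(N_i, 2^{-db_i})$, so with $N_i \ge 10 \cdot 2^{db_i} n^2$ (say) a Chernoff bound gives that every bin has $\Omega(N_i 2^{-db_i}) = \Omega(n^2)$ states, except with probability $\le 2^{-\Omega(n^2)}$ after a union bound over the $2^{db_i}$ bins. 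Within a bin, repeatedly pairing-and-combining (keeping failures in place) is a branching process whose expected yield to $\mathcal{L}_{i+1}$ is a $(1/4)(1 + 1/4 + 1/16 + \cdots) = 1/3$ fraction of the bin minus $O(\log)$ rounds of boundary losses from discarded odd states and the $n^2$ stopping threshold; a second Chernoff/martingale bound shows the actual yield is at least, say, $N_i/(3.0001)$ with overwhelming probability.

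Chaining the invariant through all $S = O(\sqrt n)$ stages — taking a union bound over the $O(\sqrt n)$ stages of the $2^{-\Omega(n^2)}$ failure events — yields $|\mathcal{L}_{S+1}| \ge N / c^{S} \ge N \cdot 3^{-(1+o(1))S}$. Plugging in $N = \Theta(n\, 2^{\sqrt{(2\log_2 3)dn}})$ and $S = \Theta(\sqrt n)$ chosen consistently with $\sum b_i = n-1$ and $b_i = \Theta(\sqrt{n/d})$, this is $\ge k$ with room to spare, and the total failure probability is $o(1)$, hence $\le 1/3$. The description of step 3 — that the surviving $r$ have all bits zero except the top one in each coordinate — follows because after stage $i$ all bits in blocks $1, \dots, i$ are zero by construction, and block $S$ is the single highest-order bit which is never zeroed, so $\mathcal{L}_{S+1}$ states have the form $\ket{\psi_r}$ with $r \in \{0, 2^{n-1}\}^d$.

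The main obstacle I anticipate is making the per-bin branching-process analysis clean and rigorous: the naive claim ``a $1/3$ fraction survives'' ignores that odd-sized pools discard a state each round, that the process halts once the bin drops below the $n^2$ threshold, and that these small losses, accumulated over $O(\log N)$ rounds per bin and $O(\sqrt n)$ stages, must be shown to erode only a lower-order term of the pool. Controlling this requires either a careful deterministic recursion for the expected yield with explicit error terms, or a supermartingale argument, and then concentration on top of it; getting the constant $c$ strictly below $3$ (so that $c^S$ does not overwhelm the $2^{O(\sqrt n)}$ budget) is where the bookkeeping is most delicate. The choice of the $b_i$ and the precise constant $\sqrt{2\log_2 3}$ in the exponent is a secondary optimization that I would handle by writing $N = 2^{\alpha\sqrt{dn} + \beta\log n}$, $b_i$ and $S$ in terms of a free parameter, and minimizing $\alpha$ subject to the two constraints $N \gtrsim 2^{d b_i}$ (bin-filling) and $N \gtrsim 3^S$ (survival), which balance exactly when $\alpha = \sqrt{2\log_2 3}$.
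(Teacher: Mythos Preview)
Your plan is sound and would prove the claim, but it differs methodologically from the paper's argument in an instructive way. The paper does \emph{not} track the process bin by bin or maintain an explicit independence invariant on the high-order bits. Instead it analyzes a \emph{relaxed} process in which any two surviving states may be paired regardless of bin, so that the only randomness is the success/failure coin of each combination; a Chernoff bound on those coins alone gives $L_i^{(j+1)} \approx L_i^{(j)}/4$ per round and hence cumulative yield $\approx L_i/3$ per stage. The real process is then compared to the relaxed one via the worst-case observation that at most $2^{db_i}$ states (one per bin) are discarded as odd-ones-out per round, yielding the recursion $L_{i+1} \ge L_i(1-o(1))/3 - O(\sqrt{n})\,2^{db_i}$. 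This sidesteps exactly the per-bin branching-process bookkeeping you correctly flag as the delicate part of your approach; the price is the additive $O(\sqrt{n})\,2^{db_i}$ term, which the paper handles by running the recursion \emph{backwards} from $L_{S+1}=O(1)$ to obtain $L_1 = O\bigl(\sqrt{n}\sum_i 3^i 2^{db_i}\bigr)$, and then choosing the $b_i$ so that every summand $3^i 2^{db_i}$ is equal. Your forward-chaining per-bin analysis is more direct and arguably more intuitive; the paper's relaxed-process trick is slicker because it decouples the combination randomness from the bin structure entirely.

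Two cautions about your writeup. First, the per-stage survival constant is not $c<3$ but $c = 3(1+o(1))$, slightly \emph{above} $3$; what saves you is that $(c/3)^S$ is only polynomial when $S=O(\sqrt{n})$ and $c-3 = O((\log n)/n)$, so aim for that rather than for $c<3$. Second, your closing summary with ``two constraints $N \gtrsim 2^{db_i}$ and $N \gtrsim 3^S$'' is too coarse: taken with constant $b_i$ it yields exponent $2\sqrt{(\log_2 3)dn}$ rather than $\sqrt{2(\log_2 3)dn}$. The correct constraints are the per-stage ones $N \gtrsim c^{\,i-1} 2^{db_i}$ coming from your own invariant $N_i = N/c^{\,i-1}$; balancing all of these simultaneously forces $b_i$ to decrease linearly in $i$ (exactly as the paper chooses $b_i = (c\sqrt{n} - i\log_2 3)/d$), and only then does optimizing over $S$ produce the claimed constant $\sqrt{2\log_2 3}$.
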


We defer the proof of Claim \ref{claim:alg} to Appendix \ref{app:alg}. Assuming the correctness of this claim, we fill in the rest of the details about how the above algorithm can be used to solve the hidden shift problem, first showing how the states $\ket{\psi_r}$ can be produced. We begin by creating a number of quantum states of the form
\[ \frac{1}{\sqrt{2^{dn+1}}} \sum_{x \in \Z_{2^n}^d} \ket{x} \left( \ket{0}\ket{f(x)} + \ket{1}\ket{g(x)}\right). \]
Such a state can be produced using one query to each of $f$ and $g$. The last register is now measured and the remaining two registers are swapped (for notational clarity). Assuming the outcome $z \in X$ was received, the residual state will be
\[ \frac{1}{\sqrt{2}} \left(\ket{0}\ket{f^{-1}(z)} + \ket{1}\ket{g^{-1}(z)} \right) = \frac{1}{\sqrt{2}} \left(\ket{0}\ket{a} + \ket{1}\ket{a-s} \right), \]
for some element $a = f^{-1}(z)$, using injectivity of $f$ and $g$ and the fact that $g(x) = f(x+s)$. The next step is to perform the inverse quantum Fourier transform over $\Z_{2^n}^d$ on the second register. This is equivalent to performing the tensor product of $d$ inverse QFTs over $\Z_{2^n}$. The result is the state
\[ \frac{1}{\sqrt{2^{dn+1}}} \left( \sum_{x \in \Z_{2^n}^d} \omega^{-x \cdot a} \ket{0}\ket{x} + \sum_{x \in \Z_{2^n}^d} \omega^{-x \cdot (a-s)} \ket{1}\ket{x} \right) = \frac{1}{\sqrt{2^{dn+1}}} \sum_{x \in \Z_{2^n}^d} \left( \ket{0} + \omega^{x \cdot s} \ket{1} \right) \omega^{-x \cdot a} \ket{x}, \]
where as before $\omega := e^{\pi i/2^{n-1}}$ and $x \cdot a = \sum_{i=1}^d x^{(i)} a^{(i)}$, with multiplication taken over $\Z_{2^n}$. We now measure the second register and are left with the residual state
\[ \ket{\psi_r} := \frac{1}{\sqrt{2}} \left( \ket{0} + \omega^{r \cdot s} \ket{1} \right) \]
for some $r \in \Z_{2^n}^d$. Note that $r$ is uniformly random and we know what it is. We now describe how the combination operation works. Recall that, given two states $\ket{\psi_r}$, $\ket{\psi_t}$, this operation should produce $\ket{\psi_{r-t}}$ with probability $1/2$, and otherwise produce $\ket{\psi_{r+t}}$, where addition and subtraction are over $\Z_{2^n}^d$. This is simply achieved by measuring the parity of the two qubits; with probability $1/2$ we get ``even'' and the state collapses to $\frac{1}{\sqrt{2}}(\ket{00} + \omega^{(r+t)\cdot s}\ket{11})$, and with probability $1/2$ we get ``odd'' and the state collapses to $\frac{1}{\sqrt{2}}(\omega^{r \cdot s} \ket{10} + \omega^{t \cdot s}\ket{01})$. By relabelling basis states, and ignoring an overall phase factor, these are equivalent to $\ket{\psi_{r+t}}$ and $\ket{\psi_{r-t}}$ respectively.

Using the above algorithm, for arbitrary $k = O(1)$ we can produce a list of $k$ states of the form $\ket{\psi_r}$, where $r$ is uniformly distributed in $\{0,2^{n-1}\}^d$. Defining $\beta \in \{0,1\}^d$ by $\beta_i = r^{(i)}_n$, we have $\ket{\psi_r} = \frac{1}{\sqrt{2}}(\ket{0} + (-1)^{\beta \cdot s'}\ket{1} )$, where $s' \in \{0,1\}^d$ is the vector of lowest-order bits of $s$ and $\beta \cdot s' = \sum_{i=1}^d \beta_i s'_i$. Therefore, if we apply a Hadamard gate and measure, we learn the value of $\beta \cdot s'$ with certainty, for a random $\beta \in \{0,1\}^d$ (which we know). It suffices to do this $O(d)$ times to learn $s'$ completely with high probability. Picking $k = O(d)$, one run of the algorithm is enough to learn $s'$ with bounded failure probability.

Once the lowest-order bit of each component of $s$ has been learned, the remaining bits can be learned using the following idea. Let $\beta \in \{0,1\}^d$ be the lowest-order bits, and let $a \in \{0,1\}^d$ be arbitrary. Define $f'(x) = f(2x + a)$ for $x \in \Z_{2^{n-1}}^d$, and $g'(x) = g( 2x + a - \beta)$. Also let $s' \in \Z_{2^n}^d$ be the $d$-tuple obtained by setting $(s')^{(i)} = \lfloor s^{(i)}/2 \rfloor$. Then we have
\[ g'(x) = g(2x+a-\beta) = f(2x + a - \beta + 2s' + \beta) = f(2(x + s') + a) = f'(x+s'). \]
We can therefore apply the above algorithm to $f'$ and $g'$ to find the lowest-order bits of $s'$, repeating this procedure to learn $s$ completely.


\subsection{The approximate case}
\label{sec:dshiftapprox}

We now show that the above algorithm still works when $g$ only approximately matches $f$. The algorithm for obtaining the lowest-order bits of the shift $s$ is based on producing $O(n 2^{\sqrt{(2\log_2 3)dn}})$ quantum states of the form
\[ \ket{\phi_r} := \frac{1}{\sqrt{2}}\left(\ket{0}\ket{f(r)} + \ket{1}\ket{g(r)} \right) \]
for uniformly random $r \in \Z_{2^n}^d$, and applying some processing to these states. One can thus see the algorithm as operating on $O(n 2^{\sqrt{(2\log_2 3)dn}})$ copies of a mixed state
\[ \rho := \frac{1}{2^{nd}} \sum_{r \in \Z_{2^n}^d} \proj{r} \otimes \proj{\phi_r}. \]
If $f$ and $g$ are modified to take different values at up to $\epsilon 2^{nd}$ positions each, and we let $\widetilde{\rho}$ denote the corresponding mixed state, we have $\|\widetilde{\rho} - \rho\|_1 \le 2\epsilon$. This implies that, if $f$ and $g$ are modified by changing their values at up to an arbitrary $O(\delta n^{-1} 2^{-\sqrt{(2\log_2 3)dn}})$ fraction of positions, for arbitrary $0 \le \delta < 1$, with probability at least $1-\delta$ the algorithm does not notice the difference between having copies of $\rho$ and copies of $\widetilde{\rho}$ and will still output the lowest-order bits of the (necessarily unique) shift $s$ maximising $|\{x:g(x) = f(x+s)\}|$.

In each subsequent iteration, the algorithm solves the same problem for functions whose domain is half the size of the previous iteration, based on the previous functions evaluated at positions determined by an arbitrary offset $a \in \{0,1\}^d$. Consider the first such choice and let $f_a,g_a:\Z_{2^{n-1}}^d \rightarrow X$ denote the new functions. If we pick $a$ uniformly at random, we have
\[ \E_a[\Pr_x[f_a(x) \neq g_a(x)]] = \epsilon. \]
Therefore, by Markov's inequality $\Pr_a[ \Pr_x[f_a(x) \neq g_a(x)] \ge \alpha \epsilon] \le 1/\alpha$ for any $\alpha \ge 1$. The same argument holds for all subsequent iterations of the algorithm too. Using a union bound over the $O(n)$ iterations, taking $\alpha = O(n)$ and $\epsilon = O(n^{-2}2^{-\sqrt{(2\log_2 3)dn}})$ suffices to ensure that the algorithm succeeds with probability $\Omega(1)$.


\section{Outlook}
\label{sec:outlook}

We have described a quantum algorithm which achieves a super-polynomial separation from classical computation for the basic problem of pattern matching on average-case inputs. There are some undesirable aspects of our algorithm. First, the super-polynomial speedup is only achieved if we accept that for certain inputs, the algorithm might fail with high probability. This cannot be avoided: even checking whether a claimed match is really a match must take time $\Omega(m^{d/2})$ in the worst case. However, observe that the algorithm of Theorem \ref{thm:main} does have the property that, if it outputs an offset at which the  pattern is claimed to match the text, we can be confident that this only differs from a real match at an $O(\gamma)$ fraction of positions.

Second, the $2^{O(d^{3/2}\sqrt{\log m})}$ component of the runtime, while $o(m^{\epsilon})$ for any $\epsilon>0$, is still undesirably high. Substantially improving this term (to be logarithmic in $m$, for example) would presumably require finding an efficient quantum algorithm for the dihedral hidden subgroup problem, which has been a major open problem for over a decade. However, as the algorithm for finding hidden shifts is used as a black box, any improvements to this would imply an improved pattern-matching algorithm.

Finally, an interesting open question is whether efficient quantum algorithms can be found for approximate pattern matching. In the classical literature, Chang and Lawler describe an approximate pattern matching algorithm running in time $O((n/m) \log m)$ on random inputs, if one ignores the time to preprocess the pattern~\cite{chang94}. Another example is the classical work of Andoni et al.~\cite{andoni13} on the noisy hidden shift problem. The quantum algorithm described in Section \ref{sec:dihedral} can be used to solve the noisy hidden shift problem for random inputs as long as the noise rate is very low ($2^{-O(\sqrt{\log n})}$ for a problem of size $n$). Solving the hidden shift problem with a constant noise rate more efficiently than is possible classically seems likely to require new ideas.


\subsection*{Acknowledgments}

I would like to thank Rapha\"el Clifford, Markus Jalsenius and Ben Sach for helpful discussions on the topic of this paper. I would also like to thank one pseudonymous and two anonymous referees for a number of detailed comments and suggestions which have significantly improved the paper. This work was supported by an EPSRC Early Career Fellowship (EP/L021005/1).


\appendix

\section{Proof of Claim \ref{claim:alg}}
\label{app:alg}

In this appendix, we analyse the algorithm of Section \ref{sec:dihedral} for identifying hidden shifts. Let $L_i = |\mathcal{L}_i|$ denote the number of states in the list at the start of the $i$'th stage. Our first task is to find a lower bound on $L_{i+1}$ in terms of $L_i$ which holds with high probability. For ease of analysis, we consider a different, ``relaxed'' process where any state is allowed to be combined with any other state, rather than being restricted to the same bin; and where we assume there are always an even number of states. We then relate the relaxed process to the real process followed by the algorithm by bounding the number of states that the real process would need to discard, given its restriction to only combining states in the same bin.

Let $L_i^{(j)}$ denote the number of states we would have in the list $\mathcal{L}_i$ after $j$ steps of the $i$'th stage, based on following the relaxed process; note that $L_i^{(0)} = L_i$. Also let $S_i^{(j)}$ denote the number of successful pairings in the $j$'th step of the $i$'th stage (again allowing any pair of states to be combined). Finally let $t_i$ denote the number of steps taken in the $i$'th stage (i.e.\ until the total number of states is at most $n^2$). Then
\be \label{eq:baseineqs} L_i^{(j)} = \frac{L_i^{(j-1)}}{2} - S_i^{(j)},\;\;\;\; L_{i+1} \ge \sum_{j=1}^{t_i} S_i^{(j)} - t_i 2^{db_i}, \ee
where the second inequality allows for the fact that at each step we may need to discard at most $2^{db_i}$ states in the real process, as compared with the relaxed process. The probability that each combination operation applied to a pair succeeds is independent and equal to $1/2$. Using a Chernoff bound, we have
\be\label{eq:badevent} \Pr[ |S_i^{(j)} - L_i^{(j-1)}/4| \ge \ln n \sqrt{L_i^{(j-1)}/4} ] \le 2 e^{-(\ln n)^2 / 2} = 2n^{-(\ln n)/2}. \ee
As there will be $\poly(n)$ steps in total throughout the algorithm, this quantity is small enough that we can take a union bound over all steps and assume that this event never happens. Making this assumption, we have
\be \label{eq:listlb} L_i^{(j+1)} \ge \frac{L_i^{(j)}}{2} - \frac{L_i^{(j)}}{4} - \frac{\ln n}{2} \sqrt{L_i^{(j)}} = \frac{L_i^{(j)}}{4}\left(1 - \frac{2\ln n}{\sqrt{L_i^{(j)}}} \right). \ee
As we have $L_i^{(j)} \ge n^2$ for all steps $j$, then
\[ L_i^{(j+1)} \ge \frac{L_i^{(j)}}{4}\left(1 - \frac{2\ln n}{n} \right), \]
implying
\be \label{eq:listlbj} L_i^{(j)} \ge L_i \left(\frac{1 - (2\ln n)/n}{4} \right)^j. \ee
By (\ref{eq:baseineqs}), (\ref{eq:badevent}) and (\ref{eq:listlb}),
\[ L_{i+1} \ge \sum_{j=1}^{t_i} S_i^{(j)} - t_i 2^{db_i} \ge \sum_{j=1}^{t_i} \frac{L_i^{(j-1)}}{4}\left(1 - \frac{2\ln n}{\sqrt{L_i^{(j-1)}}}\right) - t_i 2^{db_i} \ge \frac{1}{4} \sum_{j=1}^{t_i} L_i^{(j-1)} \left(1 - \frac{2\ln n}{n}\right) - t_i 2^{db_i} \]
and so by (\ref{eq:listlbj}), writing $\xi = (2\ln n)/n$,
\beas
L_{i+1} &\ge& \frac{L_i}{4} (1-\xi) \sum_{j=0}^{t_i-1}  \left(\frac{1 - \xi}{4} \right)^j - t_i 2^{db_i} = \frac{L_i}{4} (1 - \xi) \frac{1 - ((1-\xi)/4)^{t_i} }{1 - (1-\xi)/4} - t_i 2^{db_i}\\
 &=& L_i\frac{(1 - \xi) (1 - ((1-\xi)/4)^{t_i}) }{3 + \xi} - t_i 2^{db_i}.
\eeas
A rough lower bound that follows from (\ref{eq:listlb}) for large enough $n$ is that $L_i^{(j+1)} \ge L_i^{(j)}/8$; and as $L_i^{(j+1)} \le L_i^{(j)}/2$ always, we have $1/8 \le L_i^{(j+1)}/L_i^{(j)} \le 1/2$. Using $L_i^{(0)} = L_i$ and $L_i^{(t_i)} \le n^2$ we obtain $\log_2 (L_i / n^2) \le t_i \le 3\log_2 (L_i / n^2)$. So
\[ ((1-\xi)/4)^{t_i} \le 4^{-t_i} \le n^4 / L_i^2. \]
Assuming that $L_i \ge n^3$ for all $1 \le i \le S$, we have
\[ \frac{(1 - \xi) (1 - ((1-\xi)/4)^{t_i}) }{3+\xi} \ge \frac{(1- (2\ln n) / n)(1-1/n^2)}{3+ (2\ln n) / n} = \frac{1- O((\log n)/n)}{3}. \]
Further assume that $L_i = 2^{O(\sqrt{n})}$ for all $i$, implying $t_i = O(\sqrt{n})$. Then
\[ L_{i+1} \ge \frac{(1- O((\log n)/n))L_i}{3} - O(\sqrt{n}) 2^{db_i}. \]
We now need to determine how large $L_1$ needs to be such that $L_{S+1}$ is still quite large. Working backwards, we can take
\[ L_{i-1} \le 3(1+O((\log n)/n))L_i + O(\sqrt{n})2^{db_{i-1}}. \]
Thus
\[ L_1 \le \left(3(1+O((\log n)/n)) \right)^S L_{S+1} + \left(3(1+O((\log n)/n)) \right)^{S-1} O(\sqrt{n})2^{db_S} + \dots + O(\sqrt{n})2^{db_1}. \]
Assuming that $L_{S+1} = O(1)$, and $S=O(\sqrt{n})$, we have
\[ L_1 = O\left(\sqrt{n} \sum_{i=1}^S 3^i 2^{db_i}\right) \]
We now need to pick values $S$, $b_i$ for the number of stages and the number of bits zeroed at each stage, such that $\sum_{i=1}^S b_i = n-1$, to minimise $\sum_{i=1}^S 3^i 2^{d b_i}$. We choose these values to make all the above terms equal to $2^{c \sqrt{n}}$ for some fixed $c$, i.e. $b_i = (c\sqrt{n} - (\log_2 3)i)/d$ (for simplicity ignoring the fact that $b_i$ has to be rounded to an integer). Relaxing to the constraint $\sum_i b_i = n$ for simplicity, we obtain $S c \sqrt{n} - (\log_2 3)S(S+1)/2 = dn$. Hence $c = d\sqrt{n}/S + (\log_2 3)(S+1)/(2\sqrt{n})$. Minimising this over $S$, we get that the minimum is found at $S = \sqrt{2 \log_3 2} \sqrt{dn}$, giving
\[ c = \frac{\sqrt{d}}{\sqrt{2 \log_3 2}} + \frac{\log_2 3(\sqrt{2 \log_3 2} \sqrt{dn} + 1)}{2 \sqrt{n}} =  \sqrt{(2\log_2 3)d} + O(1/\sqrt{n}).\]
Thus
\[ L_1 = O(\sqrt{n}\,S \, 2^{(\sqrt{2(\log_2 3)d}+O(1/\sqrt{n}))\sqrt{n}}) = O(n 2^{\sqrt{(2\log_2 3)dn}}) = O(n 2^{1.781\dots \sqrt{dn}}) \]
as claimed.

\bibliographystyle{plain}


\end{document}